%% file: main.tex
\documentclass[aps,pra,reprint,superscriptaddress,nofootinbib]{revtex4-2}
\usepackage[utf8]{inputenc}
\usepackage[T1]{fontenc}
\usepackage{lmodern,microtype}
\usepackage{amsmath,amssymb,amsfonts,amsthm,mathtools,bm,mathrsfs}
\usepackage{booktabs,enumitem,graphicx,xcolor}
\usepackage{silence}
\usepackage[colorlinks=true,linkcolor=blue!45!black,citecolor=blue!45!black,urlcolor=blue!45!black]{hyperref}
\usepackage[nameinlink,capitalize]{cleveref}
\graphicspath{{figures/}}
\hypersetup{
 pdftitle={Maximal Classicalization of Finite-Group Quantum Reference-Frame Noise},
 pdfauthor={Maxim V. Churilov},
 pdfsubject={Quantum reference frames, random-unitary channels, and representation theory},
 pdfkeywords={quantum reference frames, random-unitary channels, diamond norm, finite groups, representation theory}
}

\newtheorem{theorem}{Theorem}[section]

\newtheorem{corollary}[theorem]{Corollary}
\newtheorem{definition}[theorem]{Definition}
\newtheorem{remark}[theorem]{Remark}
\newtheorem{example}[theorem]{Example}

\newcommand{\CPTP}{\mathsf{CPTP}}
\newcommand{\id}{\mathrm{id}}
\newcommand{\Tr}{\operatorname{Tr}}
\newcommand{\Ad}{\operatorname{Ad}}
\newcommand{\Prob}{\mathsf P}
\newcommand{\TV}{d_{\mathrm{TV}}}
\newcommand{\Irr}{\operatorname{Irr}}
\newcommand{\supp}{\operatorname{supp}}
\newcommand{\rank}{\operatorname{rank}}
\newcommand{\ketbra}[2]{|#1\rangle\!\langle#2|}

\input{physics_format}

\begin{document}

\title{Maximal Classicalization of Finite-Group Quantum Reference-Frame Noise}
\author{Maxim V. Churilov}
\email{churilovm1305@gmail.com}
\affiliation{Independent Researcher, Orenburg, Russia}
\date{April 5, 2026}

\begin{abstract}
A finite quantum reference token with group-valued misalignment induces a random-unitary channel.  We study simulation by a single arbitrary CPTP map applied after that channel.  For a unitary representation $U$ of a finite group $G$, we prove that the following conditions are equivalent: $U$ contains every irreducible type; one ancilla-assisted input has an orthonormal $G$-orbit; signed group measures embed isometrically into channels in diamond norm; and, for every pair of noise laws $p,q$,
\[
 \inf_{\Lambda\in\CPTP}\frac12\|\Phi_q^U-\Lambda\Phi_p^U\|_\diamond
 =\min_{r\in\Prob(G)}\frac12\|q-r*p\|_1.
\]
Thus representation completeness is the exact carrier condition for universal reduction of quantum post-processing to classical convolution.  We determine the minimum ancilla dimensions for an orthogonal orbit and for an invariant calibration seed.  For an incomplete carrier, with visible Plancherel dimension $S(U)$, we derive the exact conditional-expectation distance $\frac12\|\id-\Phi_u^U\|_\diamond=1-1/S(U)$ and an explicit quantum--classical deficiency gap.  For irreducible carriers the deficiency is obtained in closed form; the faithful two-dimensional representation of $S_3$ yields an exact ten-percent reduction relative to classical convolution.  We also characterize law identifiability through the conjugation representation, provide finite linear programs and decision witnesses, and establish both a finite-dimensional obstruction and stable visible-band reconstruction for infinite compact groups.  Deterministic ancillary code reproduces the finite-group examples and numerical regression checks.
Universal classicalization is inherited by every subquotient, yielding a
monotone obstruction hierarchy and corresponding quantitative lower
bounds.  Exact orbit orthogonality is supplemented by a
robust frame theorem: the smallest eigenvalue of an experimentally prepared
orbit Gram matrix gives a simultaneous multiplicative lower bound for the
diamond norm of every signed branch contrast.
\end{abstract}

\maketitle

\section{Introduction}

A noisy finite quantum reference frame is often described by an unknown group element $g\in G$.  If the token transforms under a unitary representation $U:g\mapsto U_g$ and the relative displacement has law $p$, discarding the classical branch produces
\begin{equation}
 \Phi_p^U(X)=\sum_{g\in G}p(g)U_gXU_g^\dagger .
 \label{eq:random-unitary}
\end{equation}
Given a source law $p$ and a target law $q$, the directional post-processing error is
\begin{equation}
 \delta_U(q|p)=\inf_{\Lambda\in\CPTP}
 \frac12\|\Phi_q^U-\Lambda\circ\Phi_p^U\|_\diamond .
 \label{eq:quantum-deficiency}
\end{equation}
Although $p$ and $q$ are classical, the optimization is not: $\Lambda$ may be coherent, may break the symmetry, may use an environment of arbitrary dimension, and is tested on inputs entangled with an ancilla.  Throughout, $\Lambda$ is a one-shot output post-processing channel on the same carrier.  The results do not include preprocessing, adaptive access to several uses, or general superchannels.

A classical correction consists of sampling an additional group element and composing the corresponding translation.  With the convention
\begin{equation}
 (r*p)(h)=\sum_{g\in G}r(hg^{-1})p(g),
 \label{eq:convolution}
\end{equation}
one has $\Phi_r^U\Phi_p^U=\Phi_{r*p}^U$, and therefore
\begin{equation}
 \delta_U(q|p)\le
 \delta_{\rm cl}(q|p):=
 \min_{r\in\Prob(G)}\frac12\|q-r*p\|_1.
 \label{eq:classical-deficiency}
\end{equation}
The central question is when equality holds for every $p,q$.

For the regular representation, an entangled probe resolves all branches and classicalization is plausible.  Regularity, however, demands carrier dimension $|G|$.  We show that the exact boundary is smaller and representation theoretic: one copy of each irreducible type is necessary and sufficient.  Sufficiency follows from an invariant weighted-character state whose orbit is orthogonal.  The same orbit both realizes the $\ell_1$ norm of every signed group measure and constrains an arbitrary local post-processing to induce a subnormalized convolution kernel.  Necessity is more delicate.  Assuming universal equality in \cref{eq:classical-deficiency}, we perturb the uniform law in the direction that separates the identity branch from the uniform mixture of all nonidentity branches.  Equality forces diamond distance two, hence an orthogonal orbit, and therefore the presence of every irreducible type.

The converse identifies the exact boundary between complete and incomplete carriers.  In particular, faithful representations can fail universal classicalization even when the map $p\mapsto\Phi_p^U$ is injective on the relevant family.  The obstruction is not only loss of a Fourier mode; it can be the absence of one common input that resolves all branches.

\subsection{Relation to prior work and claim boundary}

Quantum reference frames, asymmetry, and group twirling are established subjects \cite{Bartlett2007,GourSpekkens2008,MarvianSpekkens2014,Chiribella2004}.  Random-unitary channels and ancilla-assisted unitary discrimination are likewise well developed \cite{AudenaertScheel2008,Rosgen2008,ChenYing2010,Bavaresco2022,Watrous2018,Holevo2012}.  General channel comparison is governed by quantum randomization theorems and diamond deficiency \cite{Shmaya2005,Buscemi2012,Jencova2016}, while environment-seizable families provide important cases in which channel discrimination reduces to state discrimination \cite{Wilde2019}.  Twirling-channel zero-error structure has also been related explicitly to irreducible multiplicities \cite{LiuHan2023}, and group-covariant channel families have been studied from a complementary extremal perspective \cite{MemarzadehSanders2022}.

The result established here is not the character orthogonality identity
by itself, nor a new general theory of covariant channels.  It is the
equivalence between representation completeness and
\emph{universal unrestricted post-processing classicalization},
including a strict converse for every incomplete carrier.  The exact
visible-support diamond distance, the irreducible deficiency, and the
finite decision certificates are consequences of this classification.
The gap for an incomplete carrier is a property of its quantum channel
encoding and need not require a coherent optimal converter; in the
uniform perturbation used below, the identity converter already
witnesses the strict inequality.

The conditional-expectation estimate used to evaluate
\cref{eq:id-twirl-distance} is a finite-dimensional
Pimsner--Popa-index argument \cite{PimsnerPopa1986}.  We give a self-contained block proof and use it only for the conditional expectation generated by the representation twirl.

\paragraph{Status of the main ingredients.}
Character orthogonality and ancilla-assisted unitary discrimination underlie the orbit criterion; quantum randomization and covariance underlie the sufficiency argument.  The principal statements are the converse showing that universal classicalization forces complete irreducible support, and the resulting quantitative failure bounds for incomplete carriers.

\section{Finite-group setting}

Let $G$ be a finite group of order $N>1$.  We use standard finite-group representation theory throughout \cite{Serre1977,FultonHarris1991}.  Write
\begin{equation}
 \mathcal H\simeq\bigoplus_{\lambda\in\Irr(G)}
 V_\lambda\otimes\mathbb C^{m_\lambda},
 \qquad
 U_g\simeq\bigoplus_\lambda U_g^{(\lambda)}\otimes I_{m_\lambda}.
 \label{eq:isotypic}
\end{equation}

\begin{definition}[Representation-complete carrier]
The representation $U$ is \emph{representation complete} if $m_\lambda\ge1$ for every irreducible representation $\lambda$ of $G$.
\end{definition}

The regular representation has multiplicity $d_\lambda=\dim V_\lambda$.  Representation completeness requires only one copy and can therefore be much smaller.

For a complex function $\sigma:G\to\mathbb C$, define the linear map
\begin{equation}
 \Phi_\sigma^U=\sum_{g\in G}\sigma(g)\Ad_{U_g}.
 \label{eq:signed-map}
\end{equation}
For probability laws this is \cref{eq:random-unitary}; for signed laws it need not be positive.

\section{Maximal classicalization theorem}

\begin{theorem}[Equivalent forms of complete group visibility]
\label{thm:main-equivalence}
For a finite-dimensional unitary representation $U$ of $G$, the following are equivalent.
\begin{enumerate}[label=(\roman*),leftmargin=2.2em]
\item $U$ is representation complete.
\item There exists a density operator $\rho$ commuting with every $U_g$ such that
\begin{equation}
 \Tr(\rho U_g)=\delta_{g,e}.
 \label{eq:delta-character}
\end{equation}
\item There exist a finite ancilla $R$ and a unit vector $|\Psi\rangle\in\mathcal H\otimes R$ whose orbit
\begin{equation}
 |\Psi_g\rangle=(U_g\otimes I_R)|\Psi\rangle
 \label{eq:orbit}
\end{equation}
is orthonormal.
\item For every complex function $\sigma$ on $G$,
\begin{equation}
 \|\Phi_\sigma^U\|_\diamond=\|\sigma\|_1.
 \label{eq:diamond-isometry}
\end{equation}
\item For every pair of probability laws $p,q$,
\begin{equation}
 \boxed{\delta_U(q|p)=\delta_{\rm cl}(q|p)}.
 \label{eq:universal-classicalization}
\end{equation}
\end{enumerate}
An ancilla in (iii) can be chosen with dimension at most $\sum_\lambda d_\lambda$.
\end{theorem}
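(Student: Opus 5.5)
I will prove the cycle (i)$\Rightarrow$(ii)$\Rightarrow$(iii)$\Rightarrow$(iv)$\Rightarrow$(v)$\Rightarrow$(i).

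\textbf{(i)$\Rightarrow$(ii).} Using the isotypic decomposition \cref{eq:isotypic}, pick one copy of each $V_\lambda$ and set
\[
\rho=\sum_\lambda \frac{d_\lambda}{N}\,\frac{P_\lambda}{d_\lambda},
\]
where $P_\lambda$ is the projector onto that copy, i.e.\ weight $d_\lambda^2/N$ spread uniformly on a $d_\lambda$-dimensional block. Then $\rho$ commutes with every $U_g$ by Schur's lemma, and $\Tr(\rho U_g)=N^{-1}\sum_\lambda d_\lambda\chi_\lambda(g)=\delta_{g,e}$ by column orthogonality. Since $\sum_\lambda d_\lambda^2=N$, $\rho$ is a density operator.

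\textbf{(ii)$\Rightarrow$(iii).} Purify $\rho$ on an ancilla $R$ of dimension $\rank\rho$, which is at most $\sum_\lambda d_\lambda$ for the $\rho$ built above; this gives the ancilla bound. Then
\[
\langle\Psi_g|\Psi_h\rangle=\Tr(\rho\,U_{g^{-1}h})=\delta_{g,h}.
\]

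\textbf{(iii)$\Rightarrow$(iv).} The upper bound $\|\Phi_\sigma^U\|_\diamond\le\sum_g|\sigma(g)|\,\|\Ad_{U_g}\|_\diamond=\|\sigma\|_1$ is the triangle inequality. For the lower bound, evaluate on $\ketbra{\Psi}{\Psi}$: the output is $\sum_g\sigma(g)\ketbra{\Psi_g}{\Psi_g}$, which is diagonal in an orthonormal basis and so has trace norm $\|\sigma\|_1$.

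\textbf{(iv)$\Rightarrow$(v).} The inequality $\delta_U\le\delta_{\rm cl}$ is \cref{eq:classical-deficiency}. For the reverse, fix $\Lambda$ and feed $\Psi$ into $\Phi_q-\Lambda\Phi_p$. Then apply a measurement in the orbit basis $\{\ketbra{\Psi_h}{\Psi_h}\}$, completed by the orthogonal complement; this is allowed since it only decreases the norm. The resulting classical vector has entries
\[
q(h)-\sum_g p(g)K(h|g),\qquad K(h|g)=\langle\Psi_h|(\Lambda\otimes\id)(\ketbra{\Psi_g}{\Psi_g})|\Psi_h\rangle .
\]
Here $K$ is only a substochastic kernel and need not be translation-invariant. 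I would fix this by twirling. Replace $\Lambda$ by its covariant twirl $\bar\Lambda=\frac1N\sum_k\Ad_{U_k^\dagger}\Lambda\Ad_{U_k}$. This does not increase the error, because $\Phi_q$ and $\Phi_p$ commute with every $\Ad_{U_k}$ (the distance is unitarily invariant) and the diamond norm is convex. For covariant $\bar\Lambda$ one has $K(h|g)=s(hg^{-1})$ with $\sum_x s(x)\le1$. The leftover mass $1-\sum s$, together with the complement outcome, contributes at least $|\,1-\sum s|$ to the $\ell_1$ distance. A short convexity argument then gives
\[
\|q-s*p\|_1+(1-\textstyle\sum s)\ \ge\ \min_r\|q-r*p\|_1,
\]
with $r=s+(1-\sum s)\delta_e$, or any completion, absorbing the defect via the triangle inequality. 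I expect this subnormalized-kernel bookkeeping to be the fiddly step.

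\textbf{(v)$\Rightarrow$(i).} This is the converse and the main obstacle. Let $u$ be uniform, $w$ uniform on $G\setminus\{e\}$, $p=u$, and $q=(1-\epsilon)u+\epsilon\,\delta_e$ for small $\epsilon$. Any $r*u$ equals $u$, so
\[
\delta_{\rm cl}=\tfrac12\|q-u\|_1=\epsilon\,(1-1/N).
\]
On the quantum side take $\Lambda=\id$; since $\Phi_u$ is idempotent, this is the natural candidate. Then
\[
\Phi_q-\Phi_u=\epsilon(\Phi_{\delta_e}-\Phi_u)=\epsilon(1-1/N)(\Ad_{e}-\Phi_w).
\]
Equality in (v) therefore forces $\tfrac12\|\id-\Phi_w\|_\diamond=1$. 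Diamond distance $2$ between $\id$ and the mixture $\Phi_w$ requires an input $\Psi$ with $\Psi\perp\Psi_g$ for all $g\ne e$, by the fidelity/orthogonality characterization of perfect distinguishability, and so (iii) holds.

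From (iii), $\rho=\Tr_R\ketbra{\Psi}{\Psi}$ satisfies $\Tr(\rho U_g)=\delta_{g,e}$, so the character function $\Tr(\rho U_g)$ is the regular-representation delta. Expanding it in irreducible characters, the coefficient of $\chi_\lambda$ equals $d_\lambda/N$, which is nonzero. But $\Tr(\rho U_g)$ lies in the span of the characters that actually occur in $U$, so every $m_\lambda\ge1$.

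The delicate point is that equality with $\Lambda=\id$ is only an upper bound on $\delta_U$. To use it, I need $\delta_U=\delta_{\rm cl}$, which forces the optimal $\Lambda$ to achieve at least $\epsilon(1-1/N)$, rather than forcing $\id$ to achieve it. I would therefore argue via a lower bound. Without completeness, the perturbation direction $\Ad_e-\Phi_w$ has diamond norm $<2$, so some cheaper value is attainable; to see this directly I would use $\Lambda=\id$ and show
\[
\tfrac12\|\Phi_q-\Phi_u\|_\diamond<\delta_{\rm cl}
\]
whenever no orthogonal orbit exists. That strict inequality is exactly what the previous paragraph gives, and it contradicts (v).
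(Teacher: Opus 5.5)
\textbf{The gap is in your (iv)$\Rightarrow$(v) step, at the twirl of $\Lambda$.} For non-abelian $G$, the claim that $\Phi_p,\Phi_q$ commute with every $\Ad_{U_k}$ fails in general. Indeed $\Ad_{U_k^\dagger}\Phi_q\Ad_{U_k}=\sum_g q(g)\Ad_{U_{k^{-1}gk}}$, which is $\Phi_q$ essentially only for class-function laws, so the twirl can strictly increase the error. Concretely, on a complete $S_3$ carrier take $p=\delta_e$, $q=\delta_{(12)}$ and $\Lambda=\Ad_{U_{(12)}}$. The error is $0$, but the twirl is $\bar\Lambda=\Phi_c$ with $c$ uniform on the three transpositions, and its error is $\tfrac12\|\delta_{(12)}-c\|_1=\tfrac23$ by (iv).

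Even for a genuinely covariant $\bar\Lambda$, the kernel comes out on the wrong side. You get $K(h|g)=\langle\Psi_{g^{-1}h}|(\bar\Lambda\otimes\id)(\ketbra{\Psi}{\Psi})|\Psi_{g^{-1}h}\rangle$, which depends on $g^{-1}h$. The processed law is therefore $p*s$, not the $s*p$ that enters $\delta_{\rm cl}$.

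The symmetrization has to act on the seed, not on the channel. The paper uses the invariant seed from (ii): because $\rho$ commutes with $U$, one has $\langle\Psi_h|(V_a\otimes I)|\Psi_g\rangle=\Tr(\rho\,U_{gh^{-1}}V_a)$ for the Kraus operators of an arbitrary $\Lambda$. Hence $K(h|g)=r_0(hg^{-1})$ with $g$-independent mass, and no twirl is needed. Equivalently, keep your orthonormal-orbit seed but average the measured lower bound over the right translates $\Psi_k$, $k\in G$. By convexity of $|\cdot|$ this replaces $K$ by $\frac1N\sum_k K(hk|gk)$, which depends only on $hg^{-1}$ and has constant mass. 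Your completion $r=s+(1-\sum s)t$ then finishes exactly as in the paper.

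\textbf{Smaller points.}
In a cycle, the step (iv)$\Rightarrow$(v) may assume only (iv), yet you feed it an orthonormal-orbit seed. You first need (iv)$\Rightarrow$(iii), which is your converse argument applied to $\|\Phi_\tau^U\|_\diamond=\|\tau\|_1=2$; this is how the paper arranges it.

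In (iii)$\Rightarrow$(i), $g\mapsto\Tr(\rho U_g)$ is not a combination of characters unless $\rho$ commutes with $U$. It is only a combination of matrix coefficients of the visible irreducibles. That still suffices by Schur orthogonality. Alternatively, twirl $\rho$ first, which preserves the delta because $k^{-1}gk=e$ iff $g=e$. The paper's shorter route is that the orbit spans a copy of the left regular representation inside $U\otimes I_R$.

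Your displayed $\rho$ has a stray factor $1/d_\lambda$; it should be $\sum_\lambda\frac{d_\lambda}{N}P_\lambda$, as your verbal description and trace computation indicate.

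Finally, the ``delicate point'' in your converse is not an issue. The chain $\delta_{\rm cl}=\delta_U\le\tfrac12\|\Phi_q-\Phi_u\|_\diamond\le\delta_{\rm cl}$ pins the identity-converter error directly, which is the paper's argument.
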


\begin{proof}
Assume (i).  Choose one unit vector $|a_\lambda\rangle$ in every nonzero multiplicity space and set
\begin{equation}
 \rho=\bigoplus_{\lambda\in\Irr(G)}
 \frac{d_\lambda}{N}I_{V_\lambda}\otimes
 \ketbra{a_\lambda}{a_\lambda}.
 \label{eq:rho}
\end{equation}
Since $\sum_\lambda d_\lambda^2=N$, this is a density operator.  It commutes with $U$, and the regular-character identity gives
\begin{equation}
 \Tr(\rho U_g)=\frac1N\sum_\lambda d_\lambda\chi_\lambda(g)=\delta_{g,e}.
\end{equation}
Thus (i)$\Rightarrow$(ii).  A canonical purification of $\rho$ satisfies
\begin{equation}
 \langle\Psi_g|\Psi_h\rangle
 =\Tr(\rho U_g^\dagger U_h)=\delta_{g,h},
\end{equation}
proving (ii)$\Rightarrow$(iii).  The rank of \cref{eq:rho} is $\sum_\lambda d_\lambda$.

If (iii) holds, the span of the orbit is invariant and $U\otimes I_R$ acts on it as the left regular representation.  Tensoring by a trivial ancilla changes multiplicities but not the set of irreducible types, so every irreducible type already occurs in $U$.  Hence (iii)$\Rightarrow$(i).

For (iii)$\Rightarrow$(iv), apply $\Phi_\sigma^U\otimes\id_R$ to $\ketbra{\Psi}{\Psi}$.  The output is diagonal in the orbit basis:
\begin{equation}
 (\Phi_\sigma^U\otimes\id_R)(\ketbra{\Psi}{\Psi})
 =\sum_g\sigma(g)\ketbra{\Psi_g}{\Psi_g}.
\end{equation}
Its trace norm is $\sum_g|\sigma(g)|$.  The reverse inequality follows from the triangle inequality and $\|\Ad_{U_g}\|_\diamond=1$.

For (iv)$\Rightarrow$(iii), use the contrast $\tau$ defined in \cref{eq:tau} below.  The isometry gives $\|\Phi_\tau^U\|_\diamond=\|\tau\|_1=2$.  Norm attainment and the support-orthogonality argument in the final paragraph of this proof then produce a vector with an orthonormal $G$-orbit.

We next prove (i)$\Rightarrow$(v).  The upper bound is \cref{eq:classical-deficiency}.  Fix an arbitrary channel $\Lambda$ with Kraus operators $V_a$, and use the invariant state and purification above.  Put $P_h=\ketbra{\Psi_h}{\Psi_h}$ and $P_\perp=I-\sum_hP_h$.  Conditional on source branch $g$, the probability of orbit outcome $h$ is
\begin{align}
 K(h|g)
 &=\sum_a|\langle\Psi_h|(V_a\otimes I)|\Psi_g\rangle|^2\\
 &=\sum_a|\Tr(\rho U_h^\dagger V_aU_g)|^2\\
 &=\sum_a|\Tr(\rho U_{gh^{-1}}V_a)|^2.
 \label{eq:relative-kernel}
\end{align}
Consequently $K(h|g)=r_0(hg^{-1})$ for a nonnegative subprobability $r_0$.  Its total mass $m=\sum_xr_0(x)$ is independent of $g$; $1-m$ is leakage into $P_\perp$.

Feed the orbit seed into the two channels and measure the orbit POVM.  The target distribution is $q$ with no leakage.  The processed source distribution is $r_0*p$ with leakage $1-m$.  Measurement contractivity gives
\begin{equation}
 \frac12\|\Phi_q^U-\Lambda\Phi_p^U\|_\diamond
 \ge\frac12\bigl(\|q-r_0*p\|_1+1-m\bigr).
 \label{eq:measured-lower}
\end{equation}
Complete $r_0$ to a probability law $r=r_0+(1-m)t$ using any $t\in\Prob(G)$.  Since $t*p$ is normalized,
\begin{equation}
 \|q-r*p\|_1\le\|q-r_0*p\|_1+1-m.
\end{equation}
The right-hand side of \cref{eq:measured-lower} is therefore at least $\delta_{\rm cl}(q|p)$.  Minimizing over $\Lambda$ proves (v).

It remains to show (v)$\Rightarrow$(iii).  Let $u(g)=1/N$ and define the zero-sum contrast
\begin{equation}
 \tau(e)=1,
 \qquad
 \tau(g)=-\frac1{N-1}\quad(g\ne e).
 \label{eq:tau}
\end{equation}
For $0<t\le(N-1)/N$, the law $q_t=u+t\tau$ is nonnegative.  Since $r*u=u$ for every probability law $r$,
\begin{equation}
 \delta_{\rm cl}(q_t|u)=\TV(q_t,u)=t.
 \label{eq:classical-t}
\end{equation}
Universal equality and the admissible choice $\Lambda=\id$ imply
\begin{equation}
 t=\delta_U(q_t|u)
 \le\frac t2\|\Phi_\tau^U\|_\diamond\le t,
\end{equation}
so $\|\Phi_\tau^U\|_\diamond=2$.  In finite dimension the norm is attained by a state $\omega$ on system and ancilla.  Writing
\begin{equation}
 \overline\omega=\frac1{N-1}\sum_{g\ne e}
 (U_g\otimes I)\omega(U_g^\dagger\otimes I),
\end{equation}
we have $\|\omega-\overline\omega\|_1=2$.  Hence their supports are orthogonal.  Positivity of every summand implies
\begin{equation}
 \supp\omega\perp(U_g\otimes I)\supp\omega
 \qquad(g\ne e).
\end{equation}
Choose a unit vector $|\Psi\rangle\in\supp\omega$.  Then $\langle\Psi|(U_g\otimes I)|\Psi\rangle=0$ for every $g\ne e$, and therefore the full orbit \cref{eq:orbit} is orthonormal.  This proves (iii) and completes the equivalence.
\end{proof}

\begin{corollary}[Exact post-processing order]
\label{cor:order}
For a representation-complete carrier, where $\succeq$ denotes simulation by a single output CPTP post-processing,
\begin{equation}
 \Phi_p^U\succeq\Phi_q^U
 \quad\Longleftrightarrow\quad
 q=r*p\ \text{for some }r\in\Prob(G).
\end{equation}
No coherent or symmetry-breaking post-processing enlarges the classical convolution order.
\end{corollary}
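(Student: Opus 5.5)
The corollary follows from the equivalence (i)$\Leftrightarrow$(v) of \cref{thm:main-equivalence}, once one checks that the infimum in \cref{eq:quantum-deficiency} is attained and that the classical minimum in \cref{eq:classical-deficiency} vanishes exactly on the convolution orbit. By definition, $\Phi_p^U\succeq\Phi_q^U$ means there is a CPTP $\Lambda$ with $\Lambda\circ\Phi_p^U=\Phi_q^U$.

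For the direction ($\Leftarrow$), suppose $q=r*p$. Take $\Lambda=\Phi_r^U$, which is a random-unitary channel and hence CPTP. The composition rule $\Phi_r^U\Phi_p^U=\Phi_{r*p}^U$ stated before \cref{eq:classical-deficiency} then gives $\Lambda\Phi_p^U=\Phi_q^U$. This direction needs no completeness.

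For the direction ($\Rightarrow$), suppose $\Lambda\Phi_p^U=\Phi_q^U$ for some CPTP $\Lambda$. Then $\delta_U(q|p)=0$. Representation completeness together with (i)$\Rightarrow$(v) of \cref{thm:main-equivalence} yields $\delta_{\rm cl}(q|p)=0$, so $\min_r\frac12\|q-r*p\|_1=0$. The set $\Prob(G)$ is compact and $r\mapsto\|q-r*p\|_1$ is continuous, so the minimum is attained at some $r$ with $q=r*p$ exactly. Alternatively, one can argue directly from the proof of (i)$\Rightarrow$(v): the lower bound \cref{eq:measured-lower} vanishes, so $m=1$ and $q=r_0*p$ with $r_0$ the induced kernel. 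This second route even identifies $r$ explicitly from the Kraus operators of $\Lambda$.

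The final sentence of the corollary is a restatement. Any coherent or symmetry-breaking $\Lambda$ realizing the simulation produces only pairs already related by a classical convolution. The only real subtlety is the attainment step, and compactness of the simplex handles it. There is also no issue with an infimum over $\Lambda$, because exact simulation was assumed.
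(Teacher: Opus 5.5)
Your proposal is correct and is essentially the paper's argument: the paper also reads the corollary off the equivalence (i)$\Leftrightarrow$(v) of \Cref{thm:main-equivalence} at zero deficiency, and you make explicit the two points it leaves implicit, namely the converter $\Lambda=\Phi_r^U$ for the reverse implication and attainment of the classical minimum on the compact simplex for the forward one. Your opening claim that attainment of the infimum over $\Lambda$ must be checked is never actually used, as you note yourself at the end.
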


\begin{proof}
Set the directional deficiency to zero in \Cref{thm:main-equivalence}.  The theorem identifies zero quantum deficiency exactly with membership of $q$ in the classical convolution image of $p$, which proves both implications.
\end{proof}

\begin{corollary}[Minimum carrier dimension]
\label{cor:min-dim}
The smallest carrier dimension admitting universal exact classicalization is
\begin{equation}
 d_{\rm comp}(G)=\sum_{\lambda\in\Irr(G)}d_\lambda.
 \label{eq:min-dim}
\end{equation}
It can be strictly smaller than the regular dimension $N=\sum_\lambda d_\lambda^2$.
\end{corollary}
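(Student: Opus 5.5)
The plan is to reduce the statement to the equivalence (i)$\Leftrightarrow$(v) of \Cref{thm:main-equivalence}. A carrier admits universal exact classicalization precisely when $U$ is representation complete, so $d_{\rm comp}(G)$ is the minimum of $\dim\mathcal H=\sum_\lambda m_\lambda d_\lambda$ over unitary representations with every $m_\lambda\ge1$. I will therefore prove a lower bound and an upper bound, and then give an example of strict inequality.

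\textbf{Lower bound.} If $U$ is representation complete, then the isotypic decomposition \cref{eq:isotypic} gives
\begin{equation}
 \dim\mathcal H=\sum_\lambda m_\lambda d_\lambda\ge\sum_\lambda d_\lambda .
\end{equation}
By the implication (v)$\Rightarrow$(i) of \Cref{thm:main-equivalence}, every carrier with universal exact classicalization has dimension at least $\sum_\lambda d_\lambda$.

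\textbf{Attainment.} Take $U=\bigoplus_{\lambda\in\Irr(G)}U^{(\lambda)}$, in which each irreducible type appears exactly once. This carrier is representation complete and has dimension $\sum_\lambda d_\lambda$. By (i)$\Rightarrow$(v), it satisfies \cref{eq:universal-classicalization} for all $p,q$, so the minimum is attained and equals \cref{eq:min-dim}.

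\textbf{Strictness.} Since $d_\lambda\ge1$, we have $d_\lambda^2\ge d_\lambda$, with equality only when $d_\lambda=1$. Hence $\sum_\lambda d_\lambda<N$ exactly when $G$ has an irreducible representation of dimension greater than one, that is, when $G$ is non-abelian. For a concrete instance take $G=S_3$. Its irreducible dimensions are $1,1,2$, so $d_{\rm comp}(S_3)=4<6=N$.

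There is no real obstacle here: all the content sits in the converse (v)$\Rightarrow$(i) of the theorem. The only point needing care is that the minimization runs over all unitary representations, including reducible ones with multiplicities. The multiplicity-one direct sum is optimal because each $d_\lambda$ contributes at least once and nothing else is forced.
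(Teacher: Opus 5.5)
Your proposal is correct and matches the paper's proof: use the equivalence (i)$\Leftrightarrow$(v) of \Cref{thm:main-equivalence}, note that each irreducible type costs at least $d_\lambda$ carrier dimensions, and observe that the multiplicity-one direct sum attains the bound. Your added remark that the inequality is strict exactly when $G$ is non-abelian (for example, $d_{\rm comp}(S_3)=4<6$) supplies the second sentence of the corollary, which the paper illustrates only through its $S_3$ example.
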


\begin{proof}
By \Cref{thm:main-equivalence}, universal exact classicalization is equivalent to the presence of every irreducible representation type.  In the isotypic decomposition, one copy of type $\lambda$ costs $d_\lambda$ carrier dimensions, so the minimum is $\sum_\lambda d_\lambda$.
\end{proof}

\begin{theorem}[Exact ancilla costs]
\label{thm:ancilla-costs}
For the multiplicity profile \cref{eq:isotypic}, the minimum ancilla
dimension of an orthogonal orbit is
\begin{equation}
 r_{\rm orb}(U)=
 \begin{cases}
 \displaystyle\max_{\lambda\in\Irr(G)}
 \left\lceil\frac{d_\lambda}{m_\lambda}\right\rceil,
 &m_\lambda>0\ \text{for every }\lambda,\\[9pt]
 \infty,&\text{otherwise}.
 \end{cases}
 \label{eq:orbit-ancilla}
\end{equation}
If the reduced calibration state is additionally required to commute
with $U$, the minimum purification-ancilla dimension is
\begin{equation}
 r_{\rm inv}(U)=\sum_{\lambda\in\Irr(G)}d_\lambda
 \label{eq:invariant-ancilla}
\end{equation}
for every representation-complete carrier.
\end{theorem}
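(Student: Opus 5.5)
The plan is to reduce orbit orthogonality to a condition on the reduced state $\rho=\Tr_R\ketbra{\Psi}{\Psi}$ and then count ranks block by block. A purification of $\rho$ exists with ancilla dimension $r$ exactly when $\rank\rho\le r$. As in the proof of \Cref{thm:main-equivalence}, $\langle\Psi_g|\Psi_h\rangle=\Tr(\rho U_{g^{-1}h})$. So an orthonormal orbit with ancilla dimension $r$ exists if and only if there is a density operator $\rho$ with $\rank\rho\le r$ and $\Tr(\rho U_g)=\delta_{g,e}$, with no commutation assumed. The case $m_\lambda=0$ for some $\lambda$ gives $r_{\rm orb}=\infty$ directly from (iii)$\Rightarrow$(i) of \Cref{thm:main-equivalence}.

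\textbf{Step 1 (Fourier characterization).} Write $\rho$ in block form with respect to \cref{eq:isotypic}. Because $U_g$ is block diagonal, the off-diagonal blocks $\rho_{\lambda\mu}$ with $\lambda\ne\mu$ do not contribute to $\Tr(\rho U_g)$. Let $\rho_\lambda$ be the diagonal block and set $A_\lambda=\Tr_{\mathbb C^{m_\lambda}}\rho_\lambda\ge0$ on $V_\lambda$. Then
\[
\Tr(\rho U_g)=\sum_\lambda\Tr\bigl(A_\lambda U^{(\lambda)}_g\bigr).
\]
By Fourier inversion on $G$, the matrix coefficients $g\mapsto (U^{(\lambda)}_g)_{ij}$ are linearly independent across $\lambda,i,j$. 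Hence this sum equals $\delta_{g,e}$ if and only if $A_\lambda=(d_\lambda/N)I_{V_\lambda}$ for every $\lambda$. One direction is the regular-character identity already used; the other follows from uniqueness of Fourier coefficients. I expect this step to be the main technical point: the whole argument depends on it, and one must check carefully that the cross-isotypic blocks really drop out.

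\textbf{Step 2 (lower bound for $r_{\rm orb}$).} Compressing $\rho$ to a block cannot increase rank, so $\rank\rho\ge\rank\rho_\lambda$. The partial trace over the $m_\lambda$-dimensional space obeys $\rank A_\lambda\le m_\lambda\rank\rho_\lambda$. By Step 1, $\rank A_\lambda=d_\lambda$, so $\rank\rho\ge\lceil d_\lambda/m_\lambda\rceil$ for every $\lambda$, and hence $\rank\rho\ge\max_\lambda\lceil d_\lambda/m_\lambda\rceil$.

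\textbf{Step 3 (achievability for $r_{\rm orb}$).} Let $r=\max_\lambda\lceil d_\lambda/m_\lambda\rceil$. For each $\lambda$ we have $d_\lambda\le m_\lambda r$, so there is a vector $\psi_\lambda\in V_\lambda\otimes(\mathbb C^{m_\lambda}\otimes\mathbb C^r)$ of Schmidt rank $d_\lambda$ with $\Tr_{\mathbb C^{m_\lambda}\otimes\mathbb C^r}\ketbra{\psi_\lambda}{\psi_\lambda}=(d_\lambda/N)I_{V_\lambda}$. Take $|\Psi\rangle=\sum_\lambda\psi_\lambda\in\mathcal H\otimes\mathbb C^r$. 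Its norm squared is $\sum_\lambda d_\lambda^2/N=1$. Its reduced diagonal blocks have partial traces $A_\lambda=(d_\lambda/N)I_{V_\lambda}$, so Step 1 shows that its orbit is orthonormal. The ancilla here is $\mathbb C^r$, which proves \cref{eq:orbit-ancilla}.

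\textbf{Step 4 (invariant seed).} Suppose in addition that $\rho$ commutes with $U$. By Schur's lemma, $\rho=\bigoplus_\lambda I_{V_\lambda}\otimes B_\lambda$ with $B_\lambda\ge0$. Then $A_\lambda=(\Tr B_\lambda)I_{V_\lambda}$, and Step 1 forces $\Tr B_\lambda=d_\lambda/N>0$. It follows that $\rank B_\lambda\ge1$ and
\[
\rank\rho=\sum_\lambda d_\lambda\rank B_\lambda\ge\sum_\lambda d_\lambda.
\]
This bound is attained by \cref{eq:rho}, whose canonical purification uses an ancilla of dimension exactly $\rank\rho=\sum_\lambda d_\lambda$. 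Since any purification ancilla must have dimension at least $\rank\rho$, this proves \cref{eq:invariant-ancilla}.
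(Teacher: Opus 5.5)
Your proof is correct, but for the orbit-ancilla part it takes a different route from the paper's.

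The paper argues representation-theoretically. An orthonormal orbit spans a $G$-invariant subspace on which $U\otimes I_r$ acts as the left regular representation, so comparing multiplicities gives $rm_\lambda\ge d_\lambda$ in one line. Achievability then comes from an isometric intertwiner embedding the regular representation into $U\otimes I_r$, applied to the delta vector at $e$.

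You instead pass to the reduced state $\rho$ and prove a Fourier lemma that characterizes every seed with an orthonormal orbit. The condition $\Tr(\rho U_g)=\delta_{g,e}$ holds exactly when each multiplicity marginal $A_\lambda=\Tr_{\mathbb C^{m_\lambda}}(P_\lambda\rho P_\lambda)$ equals $(d_\lambda/N)I_{V_\lambda}$. Cross-isotypic coherences and the structure inside the multiplicity spaces are left completely free; they drop out simply because $U_g$ is block diagonal, so your worry there is harmless. Your lower bound then follows from the rank inequality $\rank A_\lambda\le m_\lambda\rank\rho$, and your upper bound from an explicit Schmidt construction.

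The paper's argument is shorter and basis-free. Yours costs the independence of matrix coefficients, but it buys more:
\begin{itemize}
\item It gives a complete linear parametrization of all orthogonal-orbit seeds; the qubit seed \cref{eq:S3-minimal-seed}, with its trivial--sign coherence, is visibly an instance.
\item Both \cref{eq:orbit-ancilla} and \cref{eq:invariant-ancilla} follow from one lemma. The invariant case is just the specialization $A_\lambda=(\Tr B_\lambda)I_{V_\lambda}$, whereas the paper treats it separately through independence of characters.
\end{itemize}

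Your achievability step is essentially the paper's embedding written in coordinates. The image of the delta vector under the intertwiner is precisely a direct sum of vectors $\psi_\lambda$ with $V_\lambda$-marginal $(d_\lambda/N)I_{V_\lambda}$, transported by an isometry $\mathbb C^{d_\lambda}\to\mathbb C^{m_\lambda}\otimes\mathbb C^{r}$.
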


\begin{proof}
An orthonormal orbit spans a copy of the left regular
representation.  The multiplicity of $V_\lambda$ in
$U\otimes I_r$ is $r m_\lambda$, whereas its multiplicity in the
regular representation is $d_\lambda$.  Therefore
$rm_\lambda\ge d_\lambda$ for every $\lambda$, which is exactly the
lower bound in \cref{eq:orbit-ancilla}.  Conversely, these inequalities
allow an isometric embedding of the regular representation into
$U\otimes I_r$.  The image of the delta vector at the identity has the
required orthonormal orbit.

For the invariant problem, every commuting state has the form
\begin{equation}
 \rho=\bigoplus_\lambda I_{V_\lambda}\otimes A_\lambda,
 \qquad A_\lambda\succeq0.
 \label{eq:commuting-state}
\end{equation}
The delta-character constraint and linear independence of irreducible
characters force
\begin{equation}
 \Tr A_\lambda=\frac{d_\lambda}{N}
 \qquad(\lambda\in\Irr(G)).
 \label{eq:A-traces}
\end{equation}
Hence every $A_\lambda$ is nonzero and
\begin{equation}
 \rank\rho
 =\sum_\lambda d_\lambda\rank A_\lambda
 \ge\sum_\lambda d_\lambda.
\end{equation}
Any purification needs an ancilla at least as large as $\rank\rho$.
Choosing each $A_\lambda$ rank one, as in \cref{eq:rho}, attains the
bound.
\end{proof}

\begin{table*}[t]
\centering
\caption{Three distinct finite-group reference-frame resources.  The
regular carrier is only one way to eliminate the ancilla.}
\label{tab:resource-costs}
\small\begin{tabular}{@{}p{0.22\textwidth}p{0.44\textwidth}p{0.27\textwidth}@{}}
\toprule
Resource & Exact condition & Minimum size\\
\midrule
representation-complete carrier
 & one copy of every irreducible type
 & $\sum_\lambda d_\lambda$ carrier dimensions\\
orthogonal orbit for fixed $U$
 & $rm_\lambda\ge d_\lambda$ for all $\lambda$
 & $r_{\rm orb}(U)$ ancilla dimensions\\
invariant delta-character seed
 & complete support and \cref{eq:A-traces}
 & $\sum_\lambda d_\lambda$ ancilla dimensions\\
regular carrier
 & $m_\lambda=d_\lambda$ for all $\lambda$
 & $N=\sum_\lambda d_\lambda^2$ carrier dimensions\\
\bottomrule
\end{tabular}
\end{table*}

\section{Quantitative failure for incomplete carriers}

The converse proof gives a computable gap, not merely a logical failure.  Define
\begin{equation}
 \kappa(U)=\frac12\|\Phi_\tau^U\|_\diamond,
 \label{eq:kappa}
\end{equation}
with $\tau$ from \cref{eq:tau}, and introduce the visible Plancherel
dimension
\begin{equation}
 S(U)=\sum_{\lambda:m_\lambda>0}d_\lambda^2.
 \label{eq:visible-S}
\end{equation}
It depends only on the set of visible irreducible types, not on their
multiplicities.

\begin{theorem}[Exact twirl index and universal gap]
\label{thm:visible-index}
Let
\begin{equation}
 \mathcal E_U=\Phi_u^U=\frac1N\sum_{g\in G}\Ad_{U_g}
 \label{eq:twirl-E}
\end{equation}
be the representation twirl.  Then
\begin{align}
 \frac12\|\id-\mathcal E_U\|_\diamond
 &=1-\frac1{S(U)},\label{eq:id-twirl-distance}\\
 \kappa(U)
 &=\frac{N}{N-1}\left(1-\frac1{S(U)}\right)
 =\frac{N[S(U)-1]}{S(U)(N-1)}.
 \label{eq:kappa-exact}
\end{align}
Consequently, for every admissible $t$,
\begin{align}
 \delta_U(q_t|u)
 &\le t\,\frac{N[S(U)-1]}{S(U)(N-1)},\label{eq:general-upper}\\
 \delta_{\rm cl}(q_t|u)-\delta_U(q_t|u)
 &\ge t\,\frac{N-S(U)}{S(U)(N-1)}.
 \label{eq:explicit-gap}
\end{align}
The second bound is strict exactly when $U$ is representation
incomplete.
\end{theorem}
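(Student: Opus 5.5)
The plan is to exploit the fact that $\mathcal E_U$ only sees the visible isotypic blocks, and to bound $\frac12\|\id-\mathcal E_U\|_\diamond$ from both sides by $1-1/S(U)$. Everything else in the statement then follows algebraically. Write $\mathrm{vis}=\{\lambda:m_\lambda>0\}$ and let $P_\lambda$ be the isotypic projections. By Schur's lemma and the orthogonality relations, the twirl acts blockwise in the decomposition \cref{eq:isotypic}:
\begin{equation*}
 \mathcal E_U(X)=\bigoplus_{\lambda\in\mathrm{vis}}\frac{I_{V_\lambda}}{d_\lambda}\otimes\Tr_{V_\lambda}(P_\lambda XP_\lambda).
\end{equation*}
Thus off-diagonal blocks $P_\lambda XP_\mu$ with $\lambda\ne\mu$ are killed. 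In other words, $\mathcal E_U$ is the trace-preserving conditional expectation onto the commutant of the algebra generated by $U$.

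\textbf{Lower bound.} I would imitate the weighted-character seed \cref{eq:rho}, but renormalized to the visible types only. Choose unit vectors $|a_\lambda\rangle$ in each visible multiplicity space and set
\begin{equation*}
 \rho_{\rm vis}=\bigoplus_{\lambda\in\mathrm{vis}}\frac{d_\lambda}{S(U)}I_{V_\lambda}\otimes\ketbra{a_\lambda}{a_\lambda},
\end{equation*}
with canonical purification $|\Psi\rangle$. Then
\begin{equation*}
 \langle\Psi|(U_g\otimes I)|\Psi\rangle=\frac{1}{S(U)}\sum_{\lambda\in\mathrm{vis}}d_\lambda\chi_\lambda(g).
\end{equation*}
Character orthogonality gives
\begin{equation*}
 \langle\Psi|(\mathcal E_U\otimes\id)(\Psi)|\Psi\rangle=\frac1N\sum_g|\langle\Psi|U_g\otimes I|\Psi\rangle|^2=\frac{1}{S(U)^2}\sum_{\lambda\in\mathrm{vis}}d_\lambda^2=\frac1{S(U)}.
\end{equation*}
For a pure state $\Psi$ and any state $\sigma$ one has $\frac12\|\Psi-\sigma\|_1\ge1-\langle\Psi|\sigma|\Psi\rangle$. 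Applying this with $\sigma=(\mathcal E_U\otimes\id)(\Psi)$ yields $\frac12\|\id-\mathcal E_U\|_\diamond\ge 1-1/S(U)$.

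\textbf{Upper bound (main obstacle).} This is the Pimsner--Popa-type step: I want to show that $\mathcal E_U-\frac1{S(U)}\id$ is completely positive. Granting this, I can write $\mathcal E_U=\frac1S\id+(1-\frac1S)\mathcal F$ with $\mathcal F$ CPTP. Then
\begin{equation*}
 \id-\mathcal E_U=\Bigl(1-\frac1S\Bigr)(\id-\mathcal F),
\end{equation*}
and the triangle inequality gives $\frac12\|\id-\mathcal E_U\|_\diamond\le1-1/S$.

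I would prove complete positivity through Choi operators. Write $J_{\id}=|\Omega\rangle\langle\Omega|$ with $|\Omega\rangle=\sum_\lambda|\Omega_\lambda\rangle$ decomposed along the blocks $P_\lambda\otimes P_\lambda$, where $|\Omega_\lambda\rangle=|\omega_\lambda\rangle\otimes|\mu_\lambda\rangle$ splits into a $V_\lambda\otimes V_\lambda$ part and a multiplicity part. The Choi operator $J_{\mathcal E}$ contains $\bigoplus_\lambda\frac1{d_\lambda}I_{V_\lambda\otimes V_\lambda}\otimes|\mu_\lambda\rangle\langle\mu_\lambda|$ (up to ordering of factors), and this term dominates the projection onto the $\omega_\lambda$ directions with weight $1/d_\lambda^2$ per unit-normalized $\omega_\lambda$. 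Only the visible $\lambda$ contribute, since the invisible blocks of $\mathcal H$ are zero.

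It therefore suffices to check the scalar inequality
\begin{equation*}
 \Bigl|\sum_\lambda \bar c_\lambda x_\lambda\Bigr|^2\le S(U)\sum_\lambda\frac{|x_\lambda|^2}{d_\lambda^2}\cdot\frac{c_\lambda^2}{1}\Big/\ldots
\end{equation*}
on the span of the $\omega_\lambda$, which reduces to Cauchy--Schwarz with weights $d_\lambda^2/S(U)$. Concretely, I need $|\Omega\rangle\langle\Omega|\le S(U)\,J_{\mathcal E}$, and the norms $\|\omega_\lambda\|^2=d_\lambda$ make the constant exactly $\sum_{\lambda\in\mathrm{vis}}d_\lambda^2=S(U)$. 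If the block bookkeeping becomes awkward, the fallback is to express $\mathcal E_U$ as a uniform mixture of $S(U)$ unitary conjugations by an orthonormal unitary basis of the visible algebra $\bigoplus_{\lambda\in\mathrm{vis}}M_{d_\lambda}\otimes I$ that includes $I$. Such a representation makes the CP decomposition immediate.

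\textbf{Consequences.} Since $\tau=\frac{N}{N-1}(\delta_e-u)$, we have $\Phi_\tau^U=\frac{N}{N-1}(\id-\mathcal E_U)$, which gives \cref{eq:kappa-exact}. Taking $\Lambda=\id$ in \cref{eq:quantum-deficiency} with $\Phi_{q_t}^U-\Phi_u^U=t\Phi_\tau^U$ yields $\delta_U(q_t|u)\le t\kappa(U)$, which is \cref{eq:general-upper}. Subtracting this from $\delta_{\rm cl}(q_t|u)=t$, established in \cref{eq:classical-t}, gives
\begin{equation*}
 t\bigl(1-\kappa(U)\bigr)=t\,\frac{N-S(U)}{S(U)(N-1)},
\end{equation*}
which is \cref{eq:explicit-gap}. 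Finally, $N=\sum_\lambda d_\lambda^2$, so $S(U)<N$ exactly when some $m_\lambda=0$. Hence the gap is strictly positive exactly for incomplete carriers.
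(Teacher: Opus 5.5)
Your proposal is correct and is essentially the paper's proof. The upper bound is the completely bounded Pimsner--Popa inequality $(\mathcal E_U\otimes\id)(X)\succeq X/S(U)$, i.e.\ complete positivity of $\mathcal E_U-\id/S(U)$, obtained from the same blockwise weighted Cauchy--Schwarz with weights $d_\lambda^2$. The lower bound uses the same visible seed $\rho_{\rm vis}$. Checking positivity only on the Choi vector, and using the fidelity $1/S(U)$ instead of the full output $P_{\rm vis}/S(U)$, are harmless streamlinings.

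Your displayed scalar inequality is garbled, and the weight is $1/d_\lambda$ per unit vector, i.e.\ $1/d_\lambda^2$ relative to the unnormalized $|\omega_\lambda\rangle$, not $1/d_\lambda^2$ per unit vector. The intended estimate is $\bigl|\sum_\lambda\langle\omega_\lambda\otimes\mu_\lambda|\phi_\lambda\rangle\bigr|^2\le S(U)\sum_\lambda d_\lambda^{-2}|\langle\omega_\lambda\otimes\mu_\lambda|\phi_\lambda\rangle|^2\le S(U)\langle\phi|J_{\mathcal E}|\phi\rangle$. The unitary-basis fallback, whose existence you do not justify, is unnecessary.
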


\begin{proof}
The isotypic formula for the twirl is
\begin{equation}
 \mathcal E_U(X)
 =\bigoplus_{\lambda:m_\lambda>0}
 \frac{I_{V_\lambda}}{d_\lambda}
 \otimes\Tr_{V_\lambda}(P_\lambda XP_\lambda).
 \label{eq:twirl-block-form}
\end{equation}
The completely bounded Pimsner--Popa inequality
\begin{equation}
 (\mathcal E_U\otimes\id_R)(X)\succeq\frac1{S(U)}X
 \qquad(X\succeq0)
 \label{eq:PP}
\end{equation}
holds for every ancilla $R$ \cite{PimsnerPopa1986}; an elementary proof is given in
\cref{app:PP}.  For any state $\omega$, it writes
\begin{equation}
 (\mathcal E_U\otimes\id)(\omega)
 =\frac{\omega}{S(U)}
 +\left(1-\frac1{S(U)}\right)\zeta
\end{equation}
for another state $\zeta$.  Therefore
\begin{equation}
 \frac12\|\omega-(\mathcal E_U\otimes\id)(\omega)\|_1
 \le1-\frac1{S(U)},
\end{equation}
which gives the upper bound in \cref{eq:id-twirl-distance}.

For the reverse bound, choose one multiplicity vector for every
visible type and define
\begin{equation}
 \rho_{\rm vis}
 =\bigoplus_{\lambda:m_\lambda>0}
 \frac{d_\lambda}{S(U)}I_{V_\lambda}
 \otimes\ketbra{a_\lambda}{a_\lambda}.
 \label{eq:visible-state}
\end{equation}
For a purification $|\Psi_{\rm vis}\rangle$, its group orbit spans the
truncated regular representation
\begin{equation}
 \mathcal R_{\rm vis}
 =\bigoplus_{\lambda:m_\lambda>0}
 V_\lambda\otimes\mathbb C^{d_\lambda}.
\end{equation}
Schur orthogonality gives
\begin{equation}
 (\mathcal E_U\otimes\id)
 (\ketbra{\Psi_{\rm vis}}{\Psi_{\rm vis}})
 =\frac1{S(U)}P_{\rm vis},
 \label{eq:visible-maxmixed}
\end{equation}
where $P_{\rm vis}$ is the projector onto that $S(U)$-dimensional
subspace.  The trace half-distance from a pure state in the range of
$P_{\rm vis}$ to $P_{\rm vis}/S(U)$ is $1-1/S(U)$, proving equality.

Finally,
\begin{equation}
 \Phi_\tau^U=\frac{N}{N-1}(\id-\mathcal E_U),
 \label{eq:tau-id-E}
\end{equation}
which proves \cref{eq:kappa-exact}.  Taking the identity converter in
\cref{eq:quantum-deficiency} gives
$\delta_U(q_t|u)\le t\kappa(U)$.  Subtracting from the classical value
$t$ yields \cref{eq:explicit-gap}.  Since
$S(U)=N$ exactly for complete irreducible support, the strictness
criterion follows.
\end{proof}

In particular, $0\le\kappa(U)\le1$, and
\begin{equation}
 \kappa(U)=1\quad\Longleftrightarrow\quad U\text{ is representation complete}.
 \label{eq:kappa-complete}
\end{equation}
Thus one fixed family of source-target laws detects every failure of
complete irreducible support and quantifies it solely through the
missing Plancherel weight.  The upper bound
\cref{eq:general-upper} need not be tight for an arbitrary reducible
carrier; the next theorem proves equality for irreducible carriers.

For irreducible carriers the gap is exact.

\begin{theorem}[Closed irreducible gap]
\label{thm:irreducible-gap}
Let $U$ be an irreducible representation of dimension $d$, and let $N=|G|$.  For $q_t=u+t\tau$,
\begin{equation}
 \delta_U(q_t|u)
 =t\,\frac{N}{N-1}\left(1-\frac1{d^2}\right),
 \qquad 0\le t\le\frac{N-1}{N}.
 \label{eq:irreducible-gap}
\end{equation}
For every nontrivial group this is strictly smaller than the classical value $t$.
\end{theorem}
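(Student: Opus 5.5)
The plan is to reduce the optimization over converters to an optimization over a single output state, and then symmetrize that state away. The upper bound $\delta_U(q_t|u)\le t\kappa(U)$ is already \cref{eq:general-upper}. For an irreducible carrier of dimension $d$, the visible Plancherel dimension is $S(U)=d^2$, so $t\kappa(U)$ equals the right-hand side of \cref{eq:irreducible-gap}. It therefore remains to prove the matching lower bound for every $\Lambda\in\CPTP$.

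\textbf{Step 1: explicit form of both channels.} By Schur's lemma, or by \cref{eq:twirl-block-form} with a single block and trivial multiplicity, the twirl is the completely depolarizing map
\[
\mathcal E_U(X)=\Tr(X)\,I/d .
\]
Consequently, for any $\Lambda$, the composition $\Lambda\Phi_u^U=\Lambda\mathcal E_U$ is the replacement channel $X\mapsto\Tr(X)\,\sigma$ with $\sigma=\Lambda(I/d)$, and every state $\sigma$ arises this way. By \cref{eq:tau-id-E}, the target channel is
\[
\Phi_{q_t}^U=\mathcal E_U+t\,\Phi_\tau^U=s\,\id+(1-s)\,\mathcal E_U,\qquad s=\frac{tN}{N-1}\in[0,1].
\]
The problem thus becomes
\[
\delta_U(q_t|u)=\min_{\sigma}\tfrac12\bigl\|s\,\id+(1-s)\mathcal E_U-\mathcal R_\sigma\bigr\|_\diamond ,
\]
where $\mathcal R_\sigma$ denotes replacement by $\sigma$.

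\textbf{Step 2: symmetrization (the main point).} Let $V$ range over the full unitary group $U(d)$, not just the image of $G$. The depolarizing channel $s\,\id+(1-s)\mathcal E_U$ is invariant under $\Ad_{V}\circ(\cdot)\circ\Ad_{V^\dagger}$. Under the same operation, $\mathcal R_\sigma$ becomes $\mathcal R_{V\sigma V^\dagger}$. The diamond norm is invariant under composition with unitary channels on either side, so the objective takes the same value at $\sigma$ and at $V\sigma V^\dagger$. The objective is also convex in $\sigma$. Averaging over Haar measure therefore shows that $\sigma=I/d$ is optimal.

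This choice gives $\mathcal R_{I/d}=\mathcal E_U$, and the difference reduces to $s(\id-\mathcal E_U)$. By \cref{eq:id-twirl-distance} with $S(U)=d^2$,
\[
\delta_U(q_t|u)=s\Bigl(1-\frac1{d^2}\Bigr),
\]
which is \cref{eq:irreducible-gap}. The only delicate point I expect is justifying the averaging rigorously: a finite unitary $2$-design, or compactness of $U(d)$ together with Jensen's inequality for the convex norm, suffices.

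\textbf{Step 3: strictness.} By Plancherel, $d^2\le N-1$ whenever $G$ is nontrivial, since the trivial representation also contributes $1$ to $\sum_\lambda d_\lambda^2=N$. Therefore $\frac{N}{N-1}(1-1/d^2)<1$ exactly when $d^2<N$, which always holds. Equivalently, \cref{eq:explicit-gap} is strict because $U$ is incomplete. Hence the quantum deficiency is strictly below $t$ for every $t>0$.
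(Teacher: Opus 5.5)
Your proposal is correct and follows the paper's argument. Schur's lemma turns $\Phi_u^U$ into the completely depolarizing channel, so every converter yields a replacer. Unitary covariance of the target together with convexity of the diamond norm makes $I/d$ the optimal replacer. The remaining error is then $s$ times the distance between $\id$ and the depolarizer. The only difference is cosmetic: you evaluate $\frac12\|\id-\mathcal E_U\|_\diamond=1-1/d^2$ from \cref{eq:id-twirl-distance} with $S(U)=d^2$ rather than from the standard identity \cref{eq:depol-distance}, and your Plancherel count $d^2\le N-1$ states the strictness step a little more explicitly than the paper does.
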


\begin{proof}
Schur's lemma gives $\Phi_u^U=\mathcal D_0$, the completely depolarizing channel on dimension $d$.  Since
\begin{equation}
 \frac1{N-1}\sum_{g\ne e}\Ad_{U_g}
 =\frac{N\mathcal D_0-\id}{N-1},
\end{equation}
one obtains
\begin{equation}
 \Phi_{q_t}^U=\mathcal D_\lambda,
 \qquad
 \lambda=\frac{Nt}{N-1}.
\end{equation}
Every post-processing of $\mathcal D_0$ is a replacer channel.  Averaging its output state over all unitary conjugations cannot increase the diamond error to the unitarily covariant target, so the optimal replacer is again $\mathcal D_0$.  The standard identity
\begin{equation}
 \frac12\|\mathcal D_\lambda-\mathcal D_0\|_\diamond
 =\lambda\left(1-\frac1{d^2}\right)
 \label{eq:depol-distance}
\end{equation}
proves the formula.  A nontrivial finite group has $N>d^2$ for any one irreducible type considered in isolation, including the $d=1$ case, and the coefficient is therefore below one.
\end{proof}

\section{Composition under independent frame groups}
\label{sec:product-groups}

The visible Plancherel dimension has a simple tensor law, which makes
the classification compositional rather than merely one-shot.

\begin{theorem}[Product-group law]
\label{thm:product-law}
Let $G_1,G_2$ have representations $U^{(1)},U^{(2)}$, and let
$U=U^{(1)}\otimes U^{(2)}$ represent $G_1\times G_2$.  Then
\begin{align}
 S(U)&=S(U^{(1)})S(U^{(2)}),\label{eq:S-product}\\
 d_{\rm comp}(G_1\times G_2)
 &=d_{\rm comp}(G_1)d_{\rm comp}(G_2),\label{eq:dcomp-product}\\
 \frac12\|\id-\mathcal E_U\|_\diamond
 &=1-\frac1{S(U^{(1)})S(U^{(2)})}.
 \label{eq:product-twirl-distance}
\end{align}
The product carrier is representation complete if and only if both
factors are representation complete.
\end{theorem}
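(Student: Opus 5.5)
The proof rests on the standard description of irreducibles of a direct product: $\Irr(G_1\times G_2)=\{\lambda_1\boxtimes\lambda_2\}$, with $d_{\lambda_1\boxtimes\lambda_2}=d_{\lambda_1}d_{\lambda_2}$ and pairwise inequivalent outer tensor products \cite{Serre1977}. The first step is to compute the multiplicity profile of $U=U^{(1)}\otimes U^{(2)}$. Tensoring the two isotypic decompositions \cref{eq:isotypic} gives
\begin{equation}
 \mathcal H_1\otimes\mathcal H_2\simeq\bigoplus_{\lambda_1,\lambda_2}
 (V_{\lambda_1}\otimes V_{\lambda_2})\otimes\mathbb C^{m_{\lambda_1}m_{\lambda_2}} .
\end{equation}
So the multiplicity of $\lambda_1\boxtimes\lambda_2$ is $m^{(1)}_{\lambda_1}m^{(2)}_{\lambda_2}$. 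This is positive exactly when both factors are positive, which means the visible set of $U$ is the Cartesian product of the two visible sets.

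Summing $d_{\lambda_1}^2d_{\lambda_2}^2$ over this product set factorizes and gives \cref{eq:S-product}. Taking both visible sets to be all of $\Irr(G_i)$, the minimal complete carrier dimension $\sum_{\lambda_1,\lambda_2}d_{\lambda_1}d_{\lambda_2}$ also factorizes, which together with \Cref{cor:min-dim} for each group gives \cref{eq:dcomp-product}. Identity \cref{eq:product-twirl-distance} follows at once from \Cref{thm:visible-index} applied to the group $G_1\times G_2$ of order $N_1N_2$, substituting \cref{eq:S-product}. There is no need to analyse the twirl $\mathcal E_U=\mathcal E_{U^{(1)}}\otimes\mathcal E_{U^{(2)}}$ directly.

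For the completeness criterion, $U$ is complete iff $S(U)=N_1N_2$. This holds because $S(U)\le N$ always, with equality iff every type is visible, which is the strictness statement in \Cref{thm:visible-index}. Since $S(U^{(i)})\le N_i$, the product $S(U^{(1)})S(U^{(2)})$ equals $N_1N_2$ iff both factors are saturated, that is, iff both $U^{(i)}$ are complete. Equivalently, the product visible set is all of $\Irr(G_1)\times\Irr(G_2)$ iff each factor set is full.

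The only delicate point is whether the paper requires each $G_i$ to have order greater than $1$. \Cref{thm:visible-index} assumes $N>1$, which holds automatically for the product if at least one factor is nontrivial. A trivial factor contributes $S=d_{\rm comp}=1$, so every formula remains consistent. I also take for granted the completeness and inequivalence of the outer tensor irreducibles, since this is standard character theory: the characters multiply and remain orthonormal. With that in hand, everything else is bookkeeping.
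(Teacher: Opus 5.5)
Your proposal is correct and follows the paper's proof essentially step for step: the multiplicity $m^{(1)}_{\lambda}m^{(2)}_{\mu}$ of $V_\lambda\otimes V_\mu$ makes the visible set a Cartesian product, the squared and unsquared dimension sums factorize, and the twirl distance comes from \cref{thm:visible-index} applied to $G_1\times G_2$. The paper also notes that $\mathcal E_U=\mathcal E_{U^{(1)}}\otimes\mathcal E_{U^{(2)}}$, which is not needed here but is used for the $n$-fold corollary, and your $S(U)=N$ argument for the completeness criterion is an equivalent rephrasing of the paper's product-set argument.
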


\begin{proof}
Every irreducible representation of a direct product is
$V_\lambda\otimes V_\mu$, with dimension $d_\lambda d_\mu$ and
multiplicity $m_\lambda^{(1)}m_\mu^{(2)}$ in the product carrier.
Visible pairs are exactly pairs of visible types.  Summing squared
dimensions gives
\begin{equation}
 \sum_{\lambda,\mu\ {\rm visible}}d_\lambda^2d_\mu^2
 =
 \left(\sum_{\lambda\ {\rm visible}}d_\lambda^2\right)
 \left(\sum_{\mu\ {\rm visible}}d_\mu^2\right),
\end{equation}
which proves \cref{eq:S-product} and the completeness criterion.
Summing unsquared dimensions over all types proves
\cref{eq:dcomp-product}.  The twirl is
$\mathcal E_{U^{(1)}}\otimes\mathcal E_{U^{(2)}}$, and
\cref{eq:product-twirl-distance} follows from
\cref{thm:visible-index}.
\end{proof}

\begin{corollary}[Many independent frame coordinates]
\label{cor:many-coordinates}
For the $n$-fold product group $G^n$ with carrier $U^{\otimes n}$,
\begin{align}
 S(U^{\otimes n})&=S(U)^n,\label{eq:S-power}\\
 \frac12\|\id-\mathcal E_{U}^{\otimes n}\|_\diamond
 &=1-\frac1{S(U)^n},\label{eq:twirl-power}\\
 \kappa_n(U)
 &=\frac{N^n}{N^n-1}\left(1-\frac1{S(U)^n}\right).
 \label{eq:kappa-power}
\end{align}
For an incomplete carrier, $\kappa_n(U)<1$ at every finite $n$, but
the uniform-contrast guaranteed gap is
\begin{equation}
 1-\kappa_n(U)
 =\frac{N^n-S(U)^n}{S(U)^n(N^n-1)}
 \sim S(U)^{-n}.
 \label{eq:gap-power}
\end{equation}
Thus tensoring does not restore universal classicalization, although
this particular normalized contrast becomes exponentially less
sensitive to the missing types.
\end{corollary}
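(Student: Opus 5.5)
The corollary is the $n$-fold specialization of \cref{thm:product-law} followed by the formula of \cref{thm:visible-index} applied to the group $G^n$, so I would argue by induction on $n$ and then substitute.

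\emph{Step 1.} Apply \cref{thm:product-law} with $G_1=G^{n-1}$, $U^{(1)}=U^{\otimes(n-1)}$, $G_2=G$, and $U^{(2)}=U$. This gives $S(U^{\otimes n})=S(U^{\otimes(n-1)})S(U)$, and induction yields \cref{eq:S-power}. The same theorem identifies the twirl of the product carrier: averaging over $G^n$ factorizes, so
\begin{equation}
 \mathcal E_{U^{\otimes n}}=\mathcal E_U^{\otimes n}.
\end{equation}
Applying \cref{eq:id-twirl-distance} to $G^n$ with $S(U^{\otimes n})=S(U)^n$ then gives \cref{eq:twirl-power}. Nothing new is needed here, because \cref{thm:visible-index} holds for an arbitrary finite group and carrier.

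\emph{Step 2.} Interpret $\kappa_n(U)$ as $\kappa(U^{\otimes n})$ for the group $G^n$. That group has order $N^n$, and its contrast $\tau$ is built from the identity $(e,\dots,e)$ against the uniform mixture of the $N^n-1$ other elements. Substituting $N\mapsto N^n$ and $S\mapsto S(U)^n$ in \cref{eq:kappa-exact} gives \cref{eq:kappa-power}.

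\emph{Step 3.} Compute $1-\kappa_n(U)$ over the common denominator $S^n(N^n-1)$, writing $S=S(U)$:
\begin{equation}
 S^n(N^n-1)-N^n(S^n-1)=N^n-S^n .
\end{equation}
This is \cref{eq:gap-power}. For an incomplete carrier $S<N$, so $N^n-S^n>0$ and hence $\kappa_n<1$ for every finite $n$. Equivalently, by \cref{thm:product-law}, $U^{\otimes n}$ is incomplete because its factors are, and \cref{eq:kappa-complete} applies.

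\emph{Step 4.} For the asymptotics, observe that
\begin{equation}
 \frac{N^n-S^n}{N^n-1}=\frac{1-(S/N)^n}{1-N^{-n}}\to1 ,
\end{equation}
because $S/N<1$. Hence $1-\kappa_n(U)\sim S(U)^{-n}$. I would note that the degenerate case $S=1$ (only the trivial type visible) still fits this statement, with gap tending to $1$.

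The only point requiring care is the identification in Step 2. The $\kappa_n$ in the statement must be read as the uniform contrast of the full product group $G^n$, not as a tensor power of the single-coordinate contrast. With that reading the corollary is purely algebraic, so I expect no substantive obstacle.
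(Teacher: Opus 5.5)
Your proposal is correct and follows essentially the paper's own argument. You iterate the product-group law so that both the visible Plancherel dimension and the group order multiply, giving $S(U)^n$ and $N^n$, and then substitute these into the one-copy twirl-distance and contrast formulas for $G^n$. Your common-denominator computation of $1-\kappa_n(U)$ and the limit $\bigl(1-(S/N)^n\bigr)/\bigl(1-N^{-n}\bigr)\to1$ fill in algebra that the paper leaves implicit.
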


\begin{proof}
Iterate \Cref{thm:product-law} over the $n$ tensor factors.  Both the visible index and the group order multiply, and substituting these products into the one-copy twirl-distance and contrast formulas gives \eqref{eq:S-power}--\eqref{eq:kappa-power}.
\end{proof}

\begin{example}[Two inequivalent $S_3$ carriers]
The irreducible dimensions of $S_3$ are $1,1,2$.  The four-dimensional carrier
\begin{equation}
 U_{\rm comp}=\mathbf1\oplus\mathrm{sgn}\oplus\mathrm{std}
\end{equation}
is representation complete and classicalizes all six-branch noise laws, despite being smaller than the six-dimensional regular representation.  The weighted state is
\begin{equation}
 \rho=\frac16\oplus\frac16\oplus\frac{2}{6}I_2.
\end{equation}
Its invariant purification has rank and ancilla dimension four.
However, \cref{eq:orbit-ancilla} shows that an arbitrary orthogonal
orbit needs only a qubit ancilla.  In a system basis adapted to
$\mathbf1\oplus\mathrm{sgn}\oplus\mathrm{std}$, one explicit seed is
\begin{equation}
 |\Psi_2\rangle
 =\frac{|{\bf1}\rangle|0\rangle
       +|\mathrm{sgn}\rangle|0\rangle}{\sqrt6}
 +\frac{|e_1\rangle|0\rangle+|e_2\rangle|1\rangle}{\sqrt3}.
 \label{eq:S3-minimal-seed}
\end{equation}
The character identity gives
$\langle\Psi_2|(U_g\otimes I_2)|\Psi_2\rangle=\delta_{g,e}$.
By contrast, the faithful two-dimensional standard representation is incomplete.  With $N=6$ and $d=2$, \cref{thm:irreducible-gap} gives
\begin{equation}
 \delta_{\rm std}(q_t|u)=\frac9{10}t,
 \qquad
 \delta_{\rm cl}(q_t|u)=t.
\end{equation}
The ten-percent gap is caused by the quantum channel encoding of the
group law and persists although the representation itself is faithful;
for this source--target pair the identity converter already attains the
optimum.
\end{example}

\begin{example}[Visible-support hierarchy for $S_3$]
For $S_3$, the visible Plancherel dimension and the universal contrast
slope \cref{eq:kappa-exact} are
\begin{center}
\footnotesize
\begin{tabular}{@{}lccc@{}}
\toprule
visible types & $S(U)$ & $\kappa(U)$ & $1-\kappa(U)$\\
\midrule
trivial only & $1$ & $0$ & $1$\\
trivial and sign & $2$ & $3/5$ & $2/5$\\
standard only & $4$ & $9/10$ & $1/10$\\
trivial and standard & $5$ & $24/25$ & $1/25$\\
all three types & $6$ & $1$ & $0$\\
\bottomrule
\end{tabular}
\end{center}
Multiplicities do not change this table.  They reduce the orbit ancilla
through \cref{eq:orbit-ancilla}, but do not restore a missing Fourier
type.
\end{example}

\begin{figure}[t]
\centering
\includegraphics[width=\columnwidth]{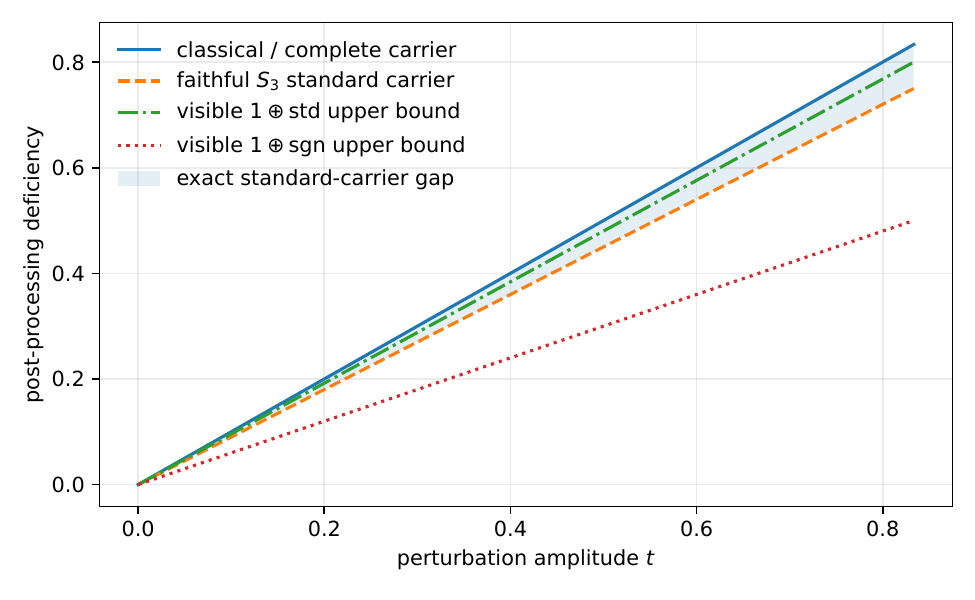}
\caption{Uniform-perturbation deficiencies on $S_3$.  The
representation-complete carrier attains the classical value; the
faithful standard carrier has exact slope $9/10$.  For the two
reducible incomplete supports, \cref{eq:general-upper} gives the shown
explicit upper bounds; equality is not asserted for those two curves.}
\label{fig:s3-gap}
\end{figure}

\section{Fourier kernels as an extreme failure mechanism}

A stronger collapse occurs when different laws induce the same channel.  Define
\begin{equation}
 \mathcal K_U^0=
 \left\{\sigma\in\mathbb R^G:
 \sum_g\sigma(g)=0,
 \ \sum_g\sigma(g)\Ad_{U_g}=0\right\}.
\end{equation}
If $0\ne\sigma\in\mathcal K_U^0$, then for sufficiently small $t>0$,
\begin{equation}
 p=u,
 \qquad q=u+t\sigma
\end{equation}
are probability laws satisfying
\begin{equation}
 \delta_U(q|u)=0,
 \qquad
 \delta_{\rm cl}(q|u)=\frac t2\|\sigma\|_1.
 \label{eq:kernel-collapse}
\end{equation}
The proof is immediate: $\Phi_q^U=\Phi_u^U$, whereas convolution leaves $u$ fixed.  This is an extreme special case of the general incomplete-carrier converse, not the only obstruction.

\begin{theorem}[Exact law-identifiability criterion]
\label{thm:law-identifiability}
Let
\begin{equation}
 V_g=U_g\otimes\overline{U_g}
 \label{eq:conjugation-representation}
\end{equation}
be the conjugation representation on vectorized operators, and let
$\mathcal T(U)$ be the set of irreducible types occurring in $V$.
Then
\begin{equation}
 p\longmapsto\Phi_p^U
\quad\text{is injective on complex coefficient functions}
\end{equation}
if and only if $V$ is representation complete.  More generally,
\begin{equation}
 \dim_{\mathbb C}\ker\!\left(
 \sigma\longmapsto\Phi_\sigma^U\right)
 =
 N-\sum_{\lambda\in\mathcal T(U)}d_\lambda^2.
 \label{eq:kernel-dimension}
\end{equation}
This criterion is strictly weaker than universal classicalization:
$U\otimes\overline U$ may be representation complete while $U$ is
not.
\end{theorem}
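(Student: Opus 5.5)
The plan is to pass to the Choi (or vectorized) picture, where the superoperator $\Ad_{U_g}$ becomes the operator $V_g=U_g\otimes\overline{U_g}$ acting on $\mathcal H\otimes\overline{\mathcal H}$. Since $X\mapsto U_gXU_g^\dagger$ vectorizes to $V_g$, the map $\sigma\mapsto\Phi_\sigma^U$ is linearly equivalent to
\begin{equation}
 \sigma\longmapsto V(\sigma):=\sum_g\sigma(g)V_g ,
\end{equation}
the image of $\sigma$ under the extension of $V$ to the group algebra $\mathbb C[G]$. The kernel we want is therefore the kernel of the algebra homomorphism $\pi_V:\mathbb C[G]\to\operatorname{End}(\mathcal H\otimes\overline{\mathcal H})$.

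Next, I will compute this kernel using the Wedderburn decomposition. Via the Fourier transform $\sigma\mapsto(\sum_g\sigma(g)U^{(\lambda)}_g)_\lambda$ we have
\begin{equation}
 \mathbb C[G]\cong\bigoplus_{\lambda\in\Irr(G)}\operatorname{End}(V_\lambda).
\end{equation}
Decompose $V\simeq\bigoplus_{\lambda\in\mathcal T(U)}V_\lambda\otimes\mathbb C^{n_\lambda}$ with $n_\lambda\ge1$. Then $\pi_V(\sigma)=\bigoplus_{\lambda\in\mathcal T(U)}\hat\sigma(\lambda)\otimes I_{n_\lambda}$. This operator vanishes if and only if $\hat\sigma(\lambda)=0$ for every $\lambda\in\mathcal T(U)$, because tensoring with $I_{n_\lambda}$ with $n_\lambda\ge1$ is injective. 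Since the Fourier transform is a bijection, the kernel is isomorphic to $\bigoplus_{\lambda\notin\mathcal T(U)}\operatorname{End}(V_\lambda)$. Its dimension is $\sum_{\lambda\notin\mathcal T(U)}d_\lambda^2=N-\sum_{\lambda\in\mathcal T(U)}d_\lambda^2$, which is \cref{eq:kernel-dimension}. Injectivity holds exactly when this number is zero, that is, when $\mathcal T(U)=\Irr(G)$, which is representation completeness of $V$.

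The step needing most care is checking that the vectorization identification is an honest linear isomorphism of superoperators. Concretely, I must verify that $\sum_g\sigma(g)\Ad_{U_g}=0$ as a map on all of $\mathcal B(\mathcal H)$ holds if and only if $\sum_g\sigma(g)U_g\otimes\overline{U_g}=0$. This follows from $\operatorname{vec}(AXB)=(A\otimes B^{T})\operatorname{vec}(X)$ together with $(U_g^\dagger)^T=\overline{U_g}$. I also need to confirm that the conjugate-representation convention does not disturb the set $\mathcal T(U)$; since $\overline{U}$ is itself a representation, it does not.

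For the final claim, that this criterion is strictly weaker than universal classicalization, two things are needed. The first is the implication: if $U$ is complete then $V$ is complete. This holds because $V\supseteq U\otimes\overline{U_\mu}$ for any type $\mu$ present in $U$, or more simply because $U\otimes\overline U$ contains $\lambda\otimes\overline\lambda$ summands. Cleanly, if $U$ is complete, then by \cref{thm:main-equivalence}(iv) we have $\|\Phi_\sigma^U\|_\diamond=\|\sigma\|_1$, so the map is injective, and by the equivalence just proved, $V$ is complete. The second is a separating example, for which I would take the standard representation of $S_3$. It is incomplete, yet $\mathrm{std}\otimes\overline{\mathrm{std}}=\mathbf1\oplus\mathrm{sgn}\oplus\mathrm{std}$, which is complete. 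This is a one-line character computation: the character of $V$ is $|\chi|^2=(4,0,1)$ on the classes $(e,\text{transpositions},3\text{-cycles})$, and its inner products with all three irreducible characters equal $1$.
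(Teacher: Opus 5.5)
Your proof is correct and follows essentially the same route as the paper. You vectorize $\Ad_{U_g}$ to $U_g\otimes\overline{U_g}$, view $\Phi^U_\sigma$ as the image of $\sum_g\sigma(g)g$ under $V$, and use the Wedderburn decomposition to show that the kernel is exactly the sum of the matrix blocks whose types are absent from $V$. You also prove the ``strictly weaker'' clause, which the paper only illustrates in the subsequent $S_3$ example, and your argument via \cref{thm:main-equivalence}(iv) is sound. Note that the passing remark about $\lambda\otimes\overline\lambda$ summands would not suffice on its own (for $G=\mathbb Z_2$ both such products are trivial). Taking $\mu=\mathbf 1$ in your first remark gives $V\supseteq U$ immediately.
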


\begin{proof}
Vectorization identifies $\Ad_{U_g}$ with
$U_g\otimes\overline{U_g}$ up to the harmless choice of tensor order.
Thus $\Phi_\sigma^U$ is the image of the group-algebra element
\begin{equation}
 x_\sigma=\sum_{g\in G}\sigma(g)g\in\mathbb C[G]
\end{equation}
under the representation $V$.  The Wedderburn decomposition is standard \cite{Diaconis1988,Terras1999}:
\begin{equation}
 \mathbb C[G]\simeq
 \bigoplus_{\lambda\in\Irr(G)}
 \operatorname{Mat}_{d_\lambda}(\mathbb C),
 \qquad
 N=\sum_\lambda d_\lambda^2.
\end{equation}
The representation $V$ annihilates exactly the matrix-algebra blocks
whose irreducible types are absent from $V$; multiplicity of a visible
type does not change the kernel.  Therefore the image dimension is
$\sum_{\lambda\in\mathcal T(U)}d_\lambda^2$, proving
\cref{eq:kernel-dimension} and the injectivity criterion.
\end{proof}

\begin{example}[Injective laws without classicalization]
\label{ex:S3-injective}
For the standard representation of $S_3$,
\begin{equation}
 \mathrm{std}\otimes\overline{\mathrm{std}}
 \simeq\mathbf1\oplus\mathrm{sgn}\oplus\mathrm{std}.
\end{equation}
Hence all three irreducible types occur in the conjugation
representation and $p\mapsto\Phi_p^{\rm std}$ is injective.  Yet
$\mathrm{std}$ itself omits the two one-dimensional types, so
\cref{thm:main-equivalence} forbids universal classicalization and
\cref{thm:irreducible-gap} gives the exact $9/10$ slope.  This example
separates three notions that are often conflated:
\begin{equation}
 \begin{gathered}
 \text{faithful group action}
 \;\not\Rightarrow\;
 \text{universal classicalization},\\
 \text{law identifiability}
 \;\not\Rightarrow\;
 \text{universal classicalization}.
 \end{gathered}
\end{equation}
\end{example}

For the two-level phase token
\begin{equation}
 U_g=\operatorname{diag}(1,e^{2\pi ig/5}),
 \qquad g\in\mathbb Z_5,
\end{equation}
the second harmonic $\sigma(g)=\cos(4\pi g/5)$ lies in $\mathcal K_U^0$.  Channel tomography is exactly blind to that component of the phase law.

\section{Exact finite-group visibility through the conjugation Gram matrix}
\label{sec:conjugation-gram}

For a finite frame group the information retained by the reduced channel can
be read off from one explicit positive semidefinite matrix.

\begin{theorem}[Conjugation Gram identity]
\label{thm:conjugation-gram}
Let $G$ be finite, let $U:G\to\mathsf U(d)$ be a unitary representation, and
write
\[
 \Phi_p(X)=\sum_{g\in G}p_gU_gXU_g^*
\]
for a real signed coefficient vector $p$.  Define
\[
 K_{g,h}=\left|\operatorname{Tr}(U_g^*U_h)\right|^2.
\]
With the Hilbert--Schmidt norm on superoperators induced by vectorization,
\[
 \|\Phi_p-\Phi_q\|_{\mathrm{HS}}^2=(p-q)^{\mathsf T}K(p-q).
\]
Consequently the invisible signed distributions are exactly $\ker K$.  If
$K$ is positive definite, then
\begin{align}
 \|p-q\|_2
 &\leq\frac{\|\Phi_p-\Phi_q\|_{\mathrm{HS}}}
 {\sqrt{\lambda_{\min}(K)}},\\
 \frac12\|p-q\|_1
 &\leq
 \frac{\sqrt{|G|}\,\|\Phi_p-\Phi_q\|_{\mathrm{HS}}}
 {2\sqrt{\lambda_{\min}(K)}}.
\end{align}
The Euclidean constant is sharp and is attained in a smallest-eigenvalue
direction of the conjugation Gram matrix; the total-variation estimate is its
explicit dimension-dependent consequence.
\end{theorem}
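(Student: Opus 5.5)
The plan is to vectorize and reduce everything to a Gram matrix computation. Under the vectorization used in \Cref{thm:law-identifiability}, the superoperator $\Ad_{U_g}$ is represented by the $d^2\times d^2$ matrix $V_g=U_g\otimes\overline{U_g}$. The Hilbert--Schmidt inner product of superoperators is then $\langle A,B\rangle_{\rm HS}=\Tr(A^\dagger B)$ on these matrices. Setting $\sigma=p-q$, which is real, linearity gives $\Phi_p-\Phi_q=\sum_g\sigma_gV_g$. Hence
\[
 \|\Phi_p-\Phi_q\|_{\rm HS}^2=\sum_{g,h}\sigma_g\sigma_h\Tr(V_g^\dagger V_h).
\]
The single computation needed is
\[
 \Tr(V_g^\dagger V_h)=\Tr(U_g^*U_h)\,\overline{\Tr(U_g^*U_h)}=|\Tr(U_g^*U_h)|^2=K_{g,h}.
\]
This uses $\Tr(A\otimes B)=\Tr A\,\Tr B$ and $\overline{U_g}^{\,\dagger}\overline{U_h}=\overline{U_g^*U_h}$. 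It proves the identity. It also shows that $K$ is a real symmetric Gram matrix, hence positive semidefinite.

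For the kernel statement, $\sigma^{\mathsf T}K\sigma=0$ holds if and only if $\Phi_\sigma=0$, which is the definition of invisibility. For a positive semidefinite $K$, the quadratic form vanishes exactly on $\ker K$, since $\sigma^{\mathsf T}K\sigma=\|K^{1/2}\sigma\|^2$. This agrees with the kernel dimension of \Cref{thm:law-identifiability} restricted to real coefficients. If a cross-check is wanted, note that $K_{g,h}$ depends only on $g^{-1}h$, because it is the character of $V$ evaluated at $g^{-1}h$. So $K$ is a group-convolution matrix, consistent with the Wedderburn picture; this is not needed for the proof.

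For the inequalities, when $K\succ0$ the Rayleigh quotient gives $\sigma^{\mathsf T}K\sigma\ge\lambda_{\min}(K)\|\sigma\|_2^2$. Taking square roots yields the Euclidean bound. Sharpness of the constant follows by taking $\sigma$ to be a real eigenvector for $\lambda_{\min}$, which exists because $K$ is real symmetric. The only subtlety is that the statement concerns differences $p-q$ of distributions. I would read the sharpness claim as attainment among real signed coefficient vectors $\sigma$, which is how $p,q$ are introduced. If genuine probability laws are required, scale the eigenvector and take $q=u$, $p=u+t\sigma$ with $t$ small. This needs the eigenvector to have zero sum; otherwise one must argue separately. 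I expect this to be the main point of care, and I would state the sharpness for signed vectors, as the theorem permits. The total-variation bound then follows from Cauchy--Schwarz, $\|\sigma\|_1\le\sqrt{|G|}\,\|\sigma\|_2$, combined with the Euclidean bound and halved.
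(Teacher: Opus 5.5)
Your proposal is correct and follows the paper's proof step for step: vectorize $\Ad_{U_g}$ as $U_g\otimes\overline{U_g}$, compute the Hilbert--Schmidt Gram entries as $|\Tr(U_g^*U_h)|^2$, then apply the Rayleigh bound and $\|v\|_1\le\sqrt{|G|}\,\|v\|_2$; you are more explicit than the paper about why the invisible set is exactly $\ker K$ (via $\sigma^{\mathsf T}K\sigma=\|K^{1/2}\sigma\|^2$) and about sharpness. Your worry about zero-sum eigenvectors is resolvable: $K$ is entrywise nonnegative with constant row sums, so $\mathbf 1$ is an eigenvector for the largest eigenvalue, and a $\lambda_{\min}(K)$-eigenvector can therefore be chosen orthogonal to $\mathbf 1$ when $|G|>1$ (if all eigenvalues coincide, $K$ is scalar and any zero-sum vector works), so sharpness is attained even by differences of genuine probability laws via $p=u+t\sigma$, $q=u$.
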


\begin{proof}
Vectorization identifies conjugation by $U_g$ with
$U_g\otimes\overline U_g$.  Therefore
\[
 \Phi_p-\Phi_q=\sum_g(p_g-q_g)(U_g\otimes\overline U_g)
\]
as a matrix on the vectorized operator space.  Taking its squared
Hilbert--Schmidt norm gives the Gram form because
\[
 \operatorname{Tr}\!\bigl[(U_g\otimes\overline U_g)^*
 (U_h\otimes\overline U_h)\bigr]
 =|\operatorname{Tr}(U_g^*U_h)|^2.
\]
The kernel statement follows immediately.  If $K\succ0$, the Rayleigh bound
controls the Euclidean norm of $p-q$, and
$\|v\|_1\leq\sqrt{|G|}\|v\|_2$ completes the estimate.
\end{proof}

\begin{remark}
This identity is also a diagnostic for nonidentifiability: no amount of
statistical precision can recover a coefficient direction in $\ker K$.
Adding representation sectors changes the Gram form and can increase its rank, thereby removing blind directions.
\end{remark}

\section{Linear programs, witnesses, and finite data}

For a representation-complete carrier, \cref{eq:universal-classicalization} reduces the quantum optimization to a finite linear program.  Let $C_p$ denote convolution by $p$.  Introducing $z_h\ge0$ gives
\begin{align}
 \text{minimize}\quad &\frac12\sum_hz_h,\nonumber\\
 \text{subject to}\quad
 &-z\le q-C_pr\le z,\nonumber\\
 &r\ge0,\qquad \mathbf1^Tr=1.
 \label{eq:primal-lp}
\end{align}
A convenient dual is
\begin{align}
 \text{maximize}\quad &\frac12(y^Tq-\alpha),\nonumber\\
 \text{subject to}\quad &\|y\|_\infty\le1,\nonumber\\
 &(C_p^Ty)_x\le\alpha\quad(x\in G).
 \label{eq:dual-lp}
\end{align}
The dual vector $y$ is simultaneously a classical separating payoff and, through the orbit projectors, an ancilla-assisted quantum decision witness.

The equality also gives a sharp stability estimate.  For reconstructed laws $\widehat p,\widehat q$,
\begin{equation}
 |\delta_U(q|p)-\delta_U(\widehat q|\widehat p)|
 \le\TV(q,\widehat q)+\TV(p,\widehat p).
 \label{eq:stability}
\end{equation}
This follows from contraction of total variation under convolution.  Multinomial confidence regions can therefore be propagated without channel tomography once the branch law is directly calibrated.

\begin{corollary}[Finite-sample calibration]
\label{cor:finite-sample}
Draw $L$ independent samples from each of $p$ and $q$ and let
$\widehat p,\widehat q$ be the empirical laws.  With probability at
least $1-\gamma$,
\begin{equation}
 |\delta_U(q|p)-\delta_U(\widehat q|\widehat p)|
 \le
 N\sqrt{\frac1{2L}\log\frac{4N}{\gamma}}
 \label{eq:multinomial-bound}
\end{equation}
for every representation-complete carrier.
\end{corollary}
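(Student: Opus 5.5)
The bound follows from three ingredients: the stability estimate \cref{eq:stability}, a concentration inequality for empirical laws on the $N$-point set $G$, and a union bound. By \Cref{thm:main-equivalence}, every representation-complete carrier has $\delta_U=\delta_{\rm cl}$. The estimate \cref{eq:stability} then holds uniformly in $U$, since it only uses the fact that convolution contracts total variation. Concretely, for any $r$ the triangle inequality gives
\[
\tfrac12\|q-r*p\|_1\le\tfrac12\|\widehat q-r*\widehat p\|_1+\TV(q,\widehat q)+\TV(r*p,r*\widehat p),
\]
and the last term is at most $\TV(p,\widehat p)$. Minimizing over $r$ and then swapping the roles of the true and empirical laws yields the two-sided bound. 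It therefore suffices to show that, with probability at least $1-\gamma$,
\[
\TV(q,\widehat q)+\TV(p,\widehat p)\le N\sqrt{\tfrac1{2L}\log\tfrac{4N}{\gamma}} .
\]

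For the concentration step, I control each coordinate separately. For fixed $g$, $\widehat p(g)$ is an average of $L$ i.i.d.\ Bernoulli$(p(g))$ indicators, so Hoeffding's inequality gives
\[
\Pr\bigl[|\widehat p(g)-p(g)|>\varepsilon\bigr]\le 2e^{-2L\varepsilon^2}.
\]
Take a union bound over the $N$ coordinates of $p$ and the $N$ coordinates of $q$, giving $2N$ events in total. Setting $2N\cdot2e^{-2L\varepsilon^2}=\gamma$ gives $\varepsilon=\sqrt{\log(4N/\gamma)/(2L)}$. On the complementary event, which has probability at least $1-\gamma$, every coordinate deviation is at most $\varepsilon$. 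Hence $\TV(p,\widehat p)\le\tfrac12 N\varepsilon$ and $\TV(q,\widehat q)\le\tfrac12 N\varepsilon$, so their sum is at most $N\varepsilon$, which is exactly the right-hand side of \cref{eq:multinomial-bound}.

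No step here is difficult. The point needing most care is the constant bookkeeping: the factor $4N$ inside the logarithm comes from $2N$ events, each carrying the two-sided Hoeffding factor $2$, and the prefactor $N$ comes from two halves of $N\varepsilon$. It should also be stated explicitly that the bound is independent of $U$, because the reduction to $\delta_{\rm cl}$ in \Cref{thm:main-equivalence} removes all dependence on the carrier. Independence between the $p$-samples and the $q$-samples is not actually needed, since the union bound handles any dependence.
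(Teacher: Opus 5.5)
Your proposal is correct and follows the paper's proof: you apply Hoeffding's inequality with a union bound over the $2N$ empirical frequencies, convert the coordinate bound into a total-variation bound with factor $N/2$, and then invoke the stability estimate \cref{eq:stability}. Your explicit derivation of that estimate, from $\delta_U=\delta_{\rm cl}$ and contraction of total variation under convolution, is the justification the paper indicates. Your remarks that the bound is uniform in $U$ and that the two samples need not be independent are also correct.
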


\begin{proof}
Hoeffding's inequality and a union bound over the $2N$ empirical
frequencies give
\begin{equation}
 \max_g\bigl\{
 |\widehat p(g)-p(g)|,\,
 |\widehat q(g)-q(g)|
 \bigr\}
 \le\sqrt{\frac1{2L}\log\frac{4N}{\gamma}}
\end{equation}
with the stated confidence.  Each total-variation error is at most
$N/2$ times this coordinate bound.  Apply \cref{eq:stability}.
\end{proof}

The bound is deliberately distribution-free and conservative.
Likelihood-ratio or exact multinomial regions can be inserted into the
same stability inequality without changing the comparison theorem.

\section{Finite-dimensional obstruction for infinite compact groups}
\label{sec:compact-obstruction}

The finite-group result does not extend without qualification.  Without an energy or bandwidth constraint, no finite-dimensional carrier can encode every probability law on an infinite compact group isometrically in total variation.

\begin{theorem}[Noninjectivity for all finite-dimensional compact-group carriers]
\label{thm:compact-noninjectivity}
Let $K$ be an infinite compact group and $U:K\to\mathcal U(\mathcal H)$ a
finite-dimensional unitary representation, $\dim\mathcal H=d$.  Define
\[
 \Phi_\mu(X)=\int_K U_gXU_g^\dagger\,\mathrm d\mu(g)
\]
for Borel probability measures $\mu$.  Then there exist distinct, mutually
singular probability measures $p,q$ of finite support such that
\begin{equation}
 \Phi_p=\Phi_q,
 \qquad d_{\rm TV}(p,q)=1.
 \label{eq:compact-kernel}
\end{equation}
Consequently $\mu\mapsto\Phi_\mu$ is never injective on all Borel laws, and
no positive constant $c$ can satisfy
$\frac12\|\Phi_p-\Phi_q\|_\diamond\geq c\,d_{\rm TV}(p,q)$ universally.
In particular, the exact finite-group classicalization isometry cannot extend
verbatim to an infinite compact group in finite dimension.
\end{theorem}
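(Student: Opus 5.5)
The plan is a dimension count.

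The map $\mu\mapsto\Phi_\mu$ is linear in $\mu$. It takes values in the real vector space $\mathcal W$ of Hermiticity-preserving superoperators on $\mathcal H$. Since $\mathcal H$ is finite-dimensional, $\dim_{\mathbb R}\mathcal W\le d^4$. Concretely, $\Phi_\mu$ is determined by the operator $\int_K U_g\otimes\overline{U_g}\,\mathrm d\mu(g)$, exactly as in the vectorization step of \Cref{thm:law-identifiability}.

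Because $K$ is infinite, I pick $M=d^4+2$ distinct points $g_1,\dots,g_M\in K$. I then consider the linear map
\[
 T:\mathbb R^M\to\mathcal W\oplus\mathbb R,\qquad
 T(c)=\Bigl(\sum_i c_i\Ad_{U_{g_i}},\ \sum_i c_i\Bigr).
\]
Its target has real dimension at most $d^4+1<M$, so $T$ has a nonzero kernel vector $c$. Such a $c$ satisfies $\sum_ic_i=0$ and $\sum_ic_i\Ad_{U_{g_i}}=0$.

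Next I split $c$ into its positive and negative parts, $c=c^+-c^-$. Both parts are nonzero, since $c\ne0$ and the entries sum to zero. They have equal mass $s=\|c\|_1/2>0$ and disjoint supports. I define
\[
 p=\frac1s\sum_ic_i^+\delta_{g_i},\qquad q=\frac1s\sum_ic_i^-\delta_{g_i}.
\]
These are finitely supported probability measures. They are mutually singular because their supports are disjoint and the points $g_i$ are distinct. Hence $d_{\rm TV}(p,q)=1$, and linearity gives $\Phi_p-\Phi_q=s^{-1}\sum_ic_i\Ad_{U_{g_i}}=0$.

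The consequences then follow directly. Noninjectivity holds because $p\ne q$ but $\Phi_p=\Phi_q$. Any inequality $\tfrac12\|\Phi_p-\Phi_q\|_\diamond\ge c\,d_{\rm TV}(p,q)$ with $c>0$ is violated by this pair, since the left side is $0$ and the right side is $c$. Finally, the finite-group isometry (iv) of \Cref{thm:main-equivalence} would force $\|\Phi_\sigma\|_\diamond=\|\sigma\|_1=2$ for $\sigma=p-q$, which is impossible here.

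The only point needing care is to use \emph{distinct} group points, so that distinct coefficient vectors really give distinct, mutually singular measures. This is guaranteed because $K$ is infinite. Note that the argument needs no topology or Haar measure: the whole obstruction is the finite-dimensional count $\dim\mathcal W\le d^4$.
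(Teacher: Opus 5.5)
Your proof is correct and is essentially the paper's argument: a dimension count in the $d^4$-dimensional real space of Hermiticity-preserving maps, followed by splitting the dependence coefficients into positive and negative parts with disjoint supports. The only difference is cosmetic. The paper takes $d^4+1$ points and obtains $\sum_j c_j=0$ automatically by applying the dependence $\sum_j c_j\Ad_{U_{g_j}}=0$ to the identity operator, whereas you take one extra point and impose the sum constraint as an additional linear equation.
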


\begin{proof}
The real vector space of linear maps on the $d^2$-dimensional real space of
Hermitian matrices has dimension $d^4$.  Choose $d^4+1$ distinct elements
$g_j\in K$.  The $d^4+1$ real-linear maps $\operatorname{Ad}_{U_{g_j}}$ lie in a
$d^4$-dimensional real vector space, so there are real coefficients, not all
zero, with
\[
 \sum_j c_j\operatorname{Ad}_{U_{g_j}}=0.
\]
Applying this identity to the unit gives $\sum_jc_j=0$.  Hence the positive
and negative parts have the same nonzero mass $C$.  Set
\[
 p=C^{-1}\sum_jc_j^+\delta_{g_j},\qquad
 q=C^{-1}\sum_jc_j^-\delta_{g_j}.
\]
Their supports are disjoint, so $d_{\rm TV}(p,q)=1$, while the linear
dependence gives $\Phi_p=\Phi_q$.
\end{proof}

\begin{theorem}[Stable reconstruction on every visible Peter--Weyl band]
\label{thm:bandlimited-compact-stability}
Let $K$ be a compact group with normalized Haar measure, let
$U:K\to\mathcal U(\mathcal H)$ be finite dimensional, and decompose the
conjugation representation $V=U\otimes\overline U$ as
\[
 V\simeq\bigoplus_{\lambda\in\widehat K}
 U_\lambda\otimes I_{m_\lambda}.
\]
Choose a finite set $\Lambda_0\subseteq\{\lambda:m_\lambda>0\}$ and let
$\mathcal E_{\Lambda_0}$ be the real Peter--Weyl space of integrable
functions whose Fourier support is contained in $\Lambda_0$ and their
conjugates.  For
$\Phi_\sigma(X)=\int_K\sigma(g)U_gXU_g^\dagger\,\mathrm dg$, the restriction
$\sigma\mapsto\Phi_\sigma$ to $\mathcal E_{\Lambda_0}$ is injective.  The
optimal constant
\begin{equation}
 c_{U,\Lambda_0}:=
 \inf_{\substack{\sigma\in\mathcal E_{\Lambda_0}\\
                  \|\sigma\|_{L^1}=1}}
 \|\Phi_\sigma\|_\diamond
 \label{eq:bandlimited-constant}
\end{equation}
is strictly positive, and
\begin{equation}
 c_{U,\Lambda_0}\|\sigma\|_{L^1}
 \leq\|\Phi_\sigma\|_\diamond
 \leq\|\sigma\|_{L^1}.
 \label{eq:bandlimited-stability}
\end{equation}
Thus for probability densities $p,q$ with $p-q\in\mathcal E_{\Lambda_0}$,
\begin{equation}
 c_{U,\Lambda_0}d_{\rm TV}(p,q)
 \leq\frac12\|\Phi_p-\Phi_q\|_\diamond
 \leq d_{\rm TV}(p,q).
 \label{eq:bandlimited-tv-diamond}
\end{equation}
Within a prescribed finite visible band, the noninjectivity obstruction of Theorem~\ref{thm:compact-noninjectivity} disappears and inversion is quantitatively stable.  Components outside the visible band remain unconstrained.
\end{theorem}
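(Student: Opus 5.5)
The argument has three parts. First, $\mathcal E_{\Lambda_0}$ is finite dimensional. Second, $\sigma\mapsto\Phi_\sigma$ is injective on it, via Peter--Weyl together with the conjugation representation, as in Theorem~\ref{thm:law-identifiability}. Third, the lower constant comes from a compactness argument, and the upper constant from the triangle inequality.

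\textbf{Finite dimensionality.} By Peter--Weyl, $\mathcal E_{\Lambda_0}$ is spanned by the matrix coefficients of the finitely many irreducibles in $\Lambda_0\cup\overline{\Lambda_0}$. It is therefore a finite-dimensional real space of continuous (indeed smooth) functions, with $\dim\le 2\sum_{\lambda\in\Lambda_0}d_\lambda^2$. On such a space $\|\cdot\|_{L^1}$ is a genuine norm, because a continuous function with zero $L^1$ norm vanishes.

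\textbf{Upper bound.} This is immediate:
\[
\|\Phi_\sigma\|_\diamond\le\int_K|\sigma(g)|\,\|\Ad_{U_g}\otimes\id\|_\diamond\,\mathrm dg=\|\sigma\|_{L^1}.
\]
Each $\Ad_{U_g}$ is a unitary channel, so its diamond norm is $1$, and $g\mapsto\Ad_{U_g}$ is continuous, so the Bochner integral is well defined.

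\textbf{Injectivity.} Vectorize, exactly as in the proof of Theorem~\ref{thm:law-identifiability}. Then $\Phi_\sigma$ becomes
\[
V(\sigma)=\int_K\sigma(g)V_g\,\mathrm dg=\bigoplus_{\lambda}\widehat\sigma(\lambda)\otimes I_{m_\lambda},
\]
where $\widehat\sigma(\lambda)=\int\sigma(g)U_\lambda(g)\,\mathrm dg$ is the Fourier coefficient in the convention matching the decomposition of $V$.

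Suppose $\Phi_\sigma=0$. Then $\widehat\sigma(\lambda)=0$ for every $\lambda$ with $m_\lambda>0$.

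The visible set of $V$ is closed under conjugation, since $\overline V=\overline U\otimes U\simeq V$ (swap the tensor factors). Hence $m_{\bar\lambda}=m_\lambda$. For $\sigma$ with Fourier support in $\Lambda_0\cup\overline{\Lambda_0}$, all of whose types are therefore visible, every Fourier coefficient vanishes, and Fourier inversion on this finite band gives $\sigma=0$.

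One convention issue needs checking: whether the Fourier transform uses $U_\lambda(g)$ or $U_\lambda(g^{-1})$. This only permutes $\lambda\leftrightarrow\bar\lambda$ within a conjugation-closed set, so it is harmless. I expect this bookkeeping, making sure the support in the definition of $\mathcal E_{\Lambda_0}$ really lies inside the set annihilated only trivially by $V$, to be the main point of care.

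\textbf{Positivity of the constant.} The map $\sigma\mapsto\|\Phi_\sigma\|_\diamond$ is a seminorm on $\mathcal E_{\Lambda_0}$. It is continuous with respect to $\|\cdot\|_{L^1}$, by the upper bound, and by injectivity it vanishes only at $0$.

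The $L^1$ unit sphere of a finite-dimensional normed space is compact. A continuous, strictly positive function on it attains a positive minimum. This minimum is $c_{U,\Lambda_0}>0$, which gives \eqref{eq:bandlimited-stability}.

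\textbf{Total-variation form.} For probability densities $p,q$ with $p-q\in\mathcal E_{\Lambda_0}$, take $\sigma=p-q$. Linearity gives $\Phi_p-\Phi_q=\Phi_\sigma$, and $\|p-q\|_{L^1}=2d_{\rm TV}(p,q)$. Halving \eqref{eq:bandlimited-stability} then yields \eqref{eq:bandlimited-tv-diamond}.
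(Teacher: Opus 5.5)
Your proposal is correct and follows essentially the same route as the paper: the Peter--Weyl block form of the vectorized map gives injectivity, compactness of the $L^1$ unit sphere in the finite-dimensional band gives a positive minimum, and the triangle inequality gives the upper bound. Your check that the visible set is closed under conjugation, via $\overline V\simeq V$, makes explicit a step the paper uses silently when it puts conjugates of $\Lambda_0$ into $\mathcal E_{\Lambda_0}$. The same goes for your remark that the transpose/inverse convention only swaps $\lambda\leftrightarrow\bar\lambda$.
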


\begin{proof}
In a basis adapted to the displayed decomposition, Peter--Weyl
orthogonality gives
\[
 \int_K\sigma(g)V_g\,\mathrm dg
 \simeq\bigoplus_{\lambda:m_\lambda>0}
 \widehat\sigma(\lambda)\otimes I_{m_\lambda},
\]
up to the harmless transpose convention in the Fourier transform.  If
$\sigma\in\mathcal E_{\Lambda_0}$ and $\Phi_\sigma=0$, every visible Fourier
matrix $\widehat\sigma(\lambda)$ vanishes; hence $\sigma=0$.  The $L^1$ unit
sphere in the finite-dimensional space $\mathcal E_{\Lambda_0}$ is compact,
and the continuous function $\sigma\mapsto\|\Phi_\sigma\|_\diamond$ has no
zero on it.  Its minimum is therefore the positive number
\eqref{eq:bandlimited-constant}.  The upper bound follows from the triangle
inequality and $\|\operatorname{Ad}_{U_g}\|_\diamond=1$.  Applying the result
to $\sigma=p-q$, with $\|p-q\|_1=2d_{\rm TV}(p,q)$, proves
\eqref{eq:bandlimited-tv-diamond}.
\end{proof}

\section{An explicit condition number for visible compact-group bands}

The compactness proof of Theorem~\ref{thm:bandlimited-compact-stability}
shows positivity of the inverse constant but does not display it.  A direct
Peter--Weyl estimate gives a universal computable lower bound.

\begin{theorem}[Explicit bandlimited diamond stability]
\label{thm:explicit-bandlimited-constant}
Use the notation of Theorem~\ref{thm:bandlimited-compact-stability}, let
$D=\dim\mathcal H$, and put
\begin{equation}
 S_{\Lambda_0}=\sum_{\lambda\in\Lambda_0}d_\lambda^2.
 \label{eq:band-plancherel-size}
\end{equation}
Then every real $\sigma\in\mathcal E_{\Lambda_0}$ satisfies
\begin{equation}
 \boxed{\qquad
 \|\Phi_\sigma\|_\diamond
 \geq\frac{1}{\sqrt D\,S_{\Lambda_0}}
       \|\sigma\|_{L^1(K)}.
 \qquad}
 \label{eq:explicit-bandlimited-stability}
\end{equation}
In particular,
$c_{U,\Lambda_0}\geq(\sqrt D\,S_{\Lambda_0})^{-1}$ and, for probability
densities with bandlimited difference,
\begin{equation}
 \frac{1}{\sqrt D\,S_{\Lambda_0}}d_{\rm TV}(p,q)
 \leq\frac12\|\Phi_p-\Phi_q\|_\diamond.
 \label{eq:explicit-bandlimited-tv}
\end{equation}
The estimate is independent of invisible multiplicities and depends only on
the carrier dimension and the total visible Plancherel weight of the chosen
band.
\end{theorem}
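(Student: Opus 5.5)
The plan is to pass from the diamond norm to the operator norm of $\Phi_\sigma$ acting on vectorized operators, use the block form from the proof of Theorem~\ref{thm:bandlimited-compact-stability} to read off the visible Fourier coefficients, and then use Fourier inversion to control $\|\sigma\|_{L^\infty}$, which dominates $\|\sigma\|_{L^1}$ for normalized Haar measure. The estimate is a chain of three inequalities, each losing one of the two factors $\sqrt D$ and $S_{\Lambda_0}$.

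\emph{Step 1: from diamond norm to the $2\to2$ norm.} For any operator $X$ on $\mathcal H$ we have $\|X\|_1\le\sqrt D\,\|X\|_2$ and $\|Y\|_2\le\|Y\|_1$. Hence
\[
\|\Phi_\sigma(X)\|_2\le\|\Phi_\sigma(X)\|_1\le\|\Phi_\sigma\|_{1\to1}\|X\|_1\le\sqrt D\,\|\Phi_\sigma\|_\diamond\|X\|_2 .
\]
So the operator norm of the vectorized superoperator $\int_K\sigma(g)V_g\,\mathrm dg$ is at most $\sqrt D\,\|\Phi_\sigma\|_\diamond$.

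\emph{Step 2: read off the Fourier coefficients.} By the block decomposition used in the proof of Theorem~\ref{thm:bandlimited-compact-stability}, this operator is unitarily equivalent to $\bigoplus_{\lambda:m_\lambda>0}\widehat\sigma(\lambda)\otimes I_{m_\lambda}$. Its operator norm is therefore $\max_{\lambda\ \mathrm{visible}}\|\widehat\sigma(\lambda)\|_{\rm op}$, and this bounds $\|\widehat\sigma(\lambda)\|_{\rm op}$ for every $\lambda$ in the band. Multiplicities play no role at this point, which is the source of the claimed independence from invisible multiplicities.

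\emph{Step 3: Fourier inversion.} For $\sigma$ in the band,
\[
\sigma(g)=\sum_\lambda d_\lambda\Tr\bigl(\widehat\sigma(\lambda)U_\lambda(g)^\dagger\bigr).
\]
Using $|\Tr(AW)|\le\|A\|_1\le d_\lambda\|A\|_{\rm op}$ for unitary $W$, this gives
\[
\|\sigma\|_{L^1}\le\|\sigma\|_{L^\infty}\le\sum_\lambda d_\lambda^2\max_\lambda\|\widehat\sigma(\lambda)\|_{\rm op}.
\]
Combined with Steps 1--2, this yields $\|\sigma\|_{L^1}\le S\sqrt D\,\|\Phi_\sigma\|_\diamond$, which is \eqref{eq:explicit-bandlimited-stability}. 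Then \eqref{eq:explicit-bandlimited-tv} follows with $\sigma=p-q$ and $\|p-q\|_1=2d_{\rm TV}(p,q)$.

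\emph{Expected obstacle: the sum over conjugate types.} The sum in Step 3 runs over $\Lambda_0$ together with the conjugates $\bar\lambda$, since $\mathcal E_{\Lambda_0}$ includes conjugates. The conjugate types are visible, because $V=U\otimes\overline U$ is self-conjugate, so Step 2 still applies to them. However, a naive count gives the Plancherel weight of $\Lambda_0\cup\overline{\Lambda_0}$, which can be up to $2S_{\Lambda_0}$. I would first try to use reality of $\sigma$, which gives $\widehat\sigma(\bar\lambda)=\overline{\widehat\sigma(\lambda)}$, so that paired terms combine into $2\,\mathrm{Re}$ of a single trace. Failing that, I would adopt the convention that $\Lambda_0$ is closed under conjugation, so that $S_{\Lambda_0}$ already counts the conjugate types. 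I would also fix the transpose convention consistently between $\widehat\sigma(\lambda)$ and the blocks of $V$, so that Step 2 bounds exactly the matrices appearing in the inversion formula.
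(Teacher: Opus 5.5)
Your three steps are the paper's argument. The paper also goes through $\|\Phi_\sigma\|_{1\to1}$ with a Hilbert--Schmidt-normalized maximizer, reads off $\max_\lambda\|\widehat\sigma(\lambda)\|_{\rm op}\le\sqrt D\,\|\Phi_\sigma\|_\diamond$ from the Liouville block form, and bounds $\|\sigma\|_{L^1}$ by Peter--Weyl inversion. It does this directly rather than through $L^\infty$, which is immaterial.

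The conjugate-type obstacle you flag is genuine, and the paper does not resolve it either. It writes the inversion sum over $\lambda\in\Lambda_0$ only, so it tacitly takes $\Lambda_0$ to be conjugation-closed. Under that reading your proof is complete. Under the literal reading, neither of your remedies gives the stated constant.

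Pairing via reality does not help, because $|2\,\mathrm{Re}\Tr(\widehat\sigma(\lambda)U_\lambda(g)^\dagger)|$ can genuinely reach $2d_\lambda\|\widehat\sigma(\lambda)\|_1$. Here is an example.
\begin{itemize}
\item Take $K=\mathbb Z_4$ and $U_g=\mathrm{diag}(1,i^g)$, so $D=2$.
\item Take $\Lambda_0=\{\chi_1\}$ with $\chi_1(g)=i^g$.
\item Take $\sigma=(1,1,-1,-1)=\mathrm{Re}\bigl((1-i)i^g\bigr)$.
\end{itemize}
Then $\|\sigma\|_{L^1}=1$ but $|\widehat\sigma(\chi_{\pm1})|=1/\sqrt2$. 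So the inequality $\|\sigma\|_{L^1}\le S_{\Lambda_0}\max_\lambda\|\widehat\sigma(\lambda)\|_{\rm op}$ that Step 3 needs is false.

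The conjugation-closed fallback yields only $1/(2\sqrt D\,S_{\Lambda_0})$, and that factor-$2$ loss cannot be absorbed. In the example, $\Phi_\sigma$ kills the diagonal of $X$ and multiplies $X_{01},X_{10}$ by $\alpha=(1-i)/2$ and $\bar\alpha$. Thus $\Phi_\sigma=|\alpha|\,\Ad_W\circ\tfrac12(\id-\Ad_Z)$ with $Z=\mathrm{diag}(1,-1)$ and $W$ a diagonal unitary. Hence $\|\Phi_\sigma\|_\diamond=1/\sqrt2$, which equals the stated lower bound exactly.

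The statement can be rescued by replacing $L^\infty$ with $L^2$ in Step 3. For normalized Haar measure, Cauchy--Schwarz and Plancherel (with your Fourier convention) give
$\|\sigma\|_{L^1}\le\|\sigma\|_{L^2}=\bigl(\sum_\lambda d_\lambda\|\widehat\sigma(\lambda)\|_2^2\bigr)^{1/2}\le\sqrt{S'}\,M$.
Here $\|\cdot\|_2$ is the Hilbert--Schmidt norm, $S'\le 2S_{\Lambda_0}$ is the Plancherel weight of $\Lambda_0\cup\overline{\Lambda_0}$, and $M$ is the visible maximum from Step 2 (conjugates are visible, as you observe). Since $\sqrt{2S}\le S$ for $S\ge2$, this proves the theorem whenever $S_{\Lambda_0}\ge2$, in fact with the better constant $(DS')^{-1/2}$.

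The one remaining case is $\Lambda_0=\{\lambda\}$ with $\lambda$ a non-real one-dimensional character; a real-valued one causes no doubling.
\begin{itemize}
\item Write $\sigma=a\lambda+\bar a\bar\lambda$, and pick $X\ne0$ with $U_gXU_g^\dagger=\lambda(g)X$, which exists because $\lambda$ is visible.
\item Since $\lambda^2$ is a nontrivial character, $\int_K\lambda^2\,\mathrm dg=0$. Hence $\Phi_\sigma(X)=\bar aX$ and $\|\sigma\|_{L^2}=\sqrt2\,|a|$.
\item Therefore $\|\Phi_\sigma\|_\diamond\ge|a|\ge\|\sigma\|_{L^1}/\sqrt2\ge\|\sigma\|_{L^1}/\sqrt D$. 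The last step holds because a nontrivial visible type forces $D\ge2$.
\end{itemize}
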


\begin{proof}
Peter--Weyl inversion and $\|U_\lambda(g)\|_{\rm op}=1$ give
\[
 \|\sigma\|_{L^1}
 \leq\sum_{\lambda\in\Lambda_0}
 d_\lambda\|\widehat\sigma(\lambda)\|_1
 \leq S_{\Lambda_0}
 \max_{\lambda\in\Lambda_0}
 \|\widehat\sigma(\lambda)\|_{\rm op}.
\]
On Hilbert--Schmidt operator space, the Liouville matrix of $\Phi_\sigma$ is,
up to transpose conventions,
$\bigoplus_\lambda\widehat\sigma(\lambda)\otimes I_{m_\lambda}$.
Hence its $2\to2$ norm equals the maximum on the right.  Choose an operator
$X$ of Hilbert--Schmidt norm one attaining that norm.  Since
$\|X\|_1\leq\sqrt D\|X\|_2$ and
$\|\Phi_\sigma(X)\|_1\geq\|\Phi_\sigma(X)\|_2$,
\[
 \|\Phi_\sigma\|_\diamond
 \geq\|\Phi_\sigma\|_{1\to1}
 \geq\frac1{\sqrt D}
 \max_\lambda\|\widehat\sigma(\lambda)\|_{\rm op}.
\]
Combining the two inequalities proves
\eqref{eq:explicit-bandlimited-stability}; applying it to $p-q$ gives
\eqref{eq:explicit-bandlimited-tv}.
\end{proof}

\section{Physical interpretation and scope}

The theorem distinguishes three carrier properties.

First, \emph{regularity} supplies each irreducible with multiplicity $d_\lambda$ and stores a literal group register.  It is sufficient but generally wasteful.

Second, \emph{representation completeness} stores one copy of each symmetry type.  This is exactly the threshold for universal classicalization.  A single weighted calibration state turns the reduced channel family into a faithful classical statistical experiment under arbitrary post-processing.

Third, ordinary \emph{faithfulness} of $g\mapsto U_g$ is weaker.  The standard representation of $S_3$ is faithful but does not classicalize degradation.  The missing resource is not knowledge of the abstract group element at the representation level; it is one input whose entire orbit is perfectly distinguishable.

Multiplicity has two sharply different roles.  It cannot compensate for
a missing irreducible type and therefore does not change $S(U)$ or the
universal contrast \cref{eq:kappa-exact}.  It can, however, reduce the
ancilla needed to embed a regular orbit, exactly as
\cref{eq:orbit-ancilla} quantifies.  The invariant calibration state
forgets this advantage because its forced block ranks give the larger
cost \cref{eq:invariant-ancilla}.

The result is finite-group and finite-dimensional.  For compact continuous groups an exactly orthogonal orbit indexed by every group element is nonnormalizable in finite dimension.  Energy constraints, approximate designs, and weak topologies are therefore essential, and the finite theorem should not be extrapolated by replacing sums with integrals.  Projective representations require passage to the effective conjugation group because central phases label the same channel.

\input{additional_results}
\input{frontier_results}
\input{final_results}

\section{Conclusion}

Universal classicalization of finite group-valued reference noise has an exact representation-theoretic boundary.  It holds precisely when the carrier contains every irreducible type, equivalently when one ancilla-assisted input resolves the full group orbit.  Under this condition arbitrary coherent post-processing gives no advantage beyond classical convolution.  If an irreducible type is missing, a fixed perturbation of the uniform law yields a strict quantum--classical gap controlled by the missing Plancherel weight; for irreducible carriers the corresponding deficiency is analytic.  The exact orbit and invariant-seed ancilla costs complete the finite-group resource accounting.

The conjugation representation separately determines whether the underlying law is identifiable from the induced channel, and the conjugation Gram matrix gives a direct conditioning certificate.  These criteria are weaker than universal classicalization, as the standard representation of $S_3$ demonstrates.  For infinite compact groups, no finite-dimensional carrier can identify all probability laws, but every prescribed visible Peter--Weyl band is stably reconstructible with an explicit condition number.  The resulting hierarchy separates group-action faithfulness, law identifiability, representation completeness, and the ancillary resources needed for operational branch resolution.

\appendix

\section{Completely bounded Pimsner--Popa inequality}
\label{app:PP}

We give a finite-dimensional proof of \cref{eq:PP} that also explains
the constant $S(U)$.  First, for every positive operator
$Y\succeq0$ on $\mathbb C^d\otimes K$,
\begin{equation}
 \frac{I_d}{d}\otimes\Tr_{\mathbb C^d}Y
 \succeq\frac1{d^2}Y.
 \label{eq:local-depol-PP}
\end{equation}
It is enough by spectral decomposition to take
$Y=|\psi\rangle\langle\psi|$.  In a Schmidt basis,
\begin{equation}
 |\psi\rangle=\sum_{i=1}^r\sqrt{\lambda_i}|i\rangle|i'\rangle,
 \qquad r\le d.
\end{equation}
For an arbitrary vector $|\phi\rangle$, Cauchy--Schwarz gives
\begin{align}
 |\langle\psi|\phi\rangle|^2
 &\le r\sum_{i=1}^r\lambda_i
 |\langle i,i'|\phi\rangle|^2\\
 &\le d^2\left\langle\phi\left|
 \frac{I_d}{d}\otimes
 \sum_i\lambda_i|i'\rangle\langle i'|
 \right|\phi\right\rangle,
\end{align}
which is \cref{eq:local-depol-PP}.

We also need a weighted pinching lemma.  Let
$\mathcal K=\bigoplus_\lambda\mathcal K_\lambda$,
$c_\lambda>0$, and suppose
\begin{equation}
 A_\lambda\succeq
 \frac1{c_\lambda}|\psi_\lambda\rangle\langle\psi_\lambda|.
\end{equation}
Then, for $|\psi\rangle=\bigoplus_\lambda|\psi_\lambda\rangle$,
\begin{equation}
 \bigoplus_\lambda A_\lambda
 \succeq
 \frac1{\sum_\lambda c_\lambda}
 |\psi\rangle\langle\psi|.
 \label{eq:weighted-pinching}
\end{equation}
Indeed, with
$a_\lambda=\langle\phi_\lambda|A_\lambda|\phi_\lambda\rangle$,
\begin{equation}
 |\langle\psi|\phi\rangle|
 \le\sum_\lambda\sqrt{c_\lambda a_\lambda}
 \le\sqrt{\sum_\lambda c_\lambda}\,
    \sqrt{\sum_\lambda a_\lambda}.
\end{equation}

Now apply \cref{eq:local-depol-PP} to every isotypic diagonal block of
a rank-one positive operator $X=|\psi\rangle\langle\psi|$, with
$d=d_\lambda$ and with the multiplicity space and arbitrary external
ancilla included in $K$.  Equation \cref{eq:weighted-pinching} with
$c_\lambda=d_\lambda^2$ yields
\begin{equation}
 (\mathcal E_U\otimes\id)(X)
 \succeq
 \frac{X}{\sum_{\lambda:m_\lambda>0}d_\lambda^2}.
\end{equation}
Spectral decomposition extends this to all $X\succeq0$, proving
\cref{eq:PP}.  The state \cref{eq:visible-state} saturates the induced
trace-distance bound, so the constant cannot be improved in the
completely bounded order.

\section{Depolarizing-channel diamond distance}

For the proof of \cref{thm:irreducible-gap}, let
$\mathcal D_\lambda=\lambda\id+(1-\lambda)\mathcal D_0$.  Then
\begin{equation}
 \|\mathcal D_\lambda-\mathcal D_0\|_\diamond
 =|\lambda|\|\id-\mathcal D_0\|_\diamond
 =2|\lambda|\left(1-\frac1{d^2}\right).
\end{equation}
The lower bound is attained on a maximally entangled state.  The matching upper bound follows from the standard semidefinite program for the diamond norm and unitary covariance; see \cite{Watrous2018}.  We record the identity to make the normalization in \cref{eq:irreducible-gap} explicit.

\section{Reproducibility}

The ancillary script deterministically regenerates \cref{fig:s3-gap} and checks the finite-group examples, the CPTP-induced convolution kernel, the visible-index witness, and \cref{eq:PP}.  It writes no report unless \texttt{--write-report} is supplied.  These calculations are regression tests; the stated constants and equivalences are proved analytically.

\section*{Scope, limitations, and open problems}

The manuscript proves an if-and-only-if finite-group boundary: universal
classicalization holds exactly when the carrier contains every irreducible
type, with explicit witnesses when a type is missing.  The ancillary
program checks character decompositions, orbit resolution, convolution
channels, and the counterexamples over representative finite groups.  The
result is exact for the declared finite-dimensional finite-group setting.

The subsequent results provide two approximate layers: a finite-cutoff
compact-group reconstruction bound, and a finite-group orbit-frame
certificate in which the smallest Gram eigenvalue controls every signed
branch contrast simultaneously.  The remaining target
is the full resource-cost law for coherent post-processing.  It must match
the dependence on the band condition number with lower witnesses and
separate missing Plancherel weight from poor frame conditioning.

\section*{finite-cutoff reconstruction bound}

Let \(P_E\) be a finite energy cutoff and let \(T_E\) be the truncated frame
analysis map on the retained representation space.  If
\(T_E^*T_E\ge\lambda_E P_E\) with \(\lambda_E>0\), then every retained
operator is reconstructed from its frame coefficients with inverse norm at
most \(\lambda_E^{-1/2}\).  A missing retained coefficient vector of norm
\(\eta\) can therefore change the reconstructed operator by at most
\(\eta/\sqrt{\lambda_E}\).

If a state has tail weight
\(\delta_E=\operatorname{tr}[(1-P_E)\rho]\), projection and the gentle
measurement estimate add at most \(2\sqrt{\delta_E}\) in trace norm.
Thus finite-group classicalization bounds extend to a compact-group
cutoff with the explicit error budget
\[
 \frac{\eta}{\sqrt{\lambda_E}}+2\sqrt{\delta_E}.
\]
Missing-irrep witnesses give the matching obstruction whenever
\(\lambda_E=0\) on an occupied retained sector.

The remaining compact-group theorem is to control
\(\lambda_E\), the missing Plancherel weight \(\eta\), and the energy tail
\(\delta_E\) uniformly as \(E\to\infty\).

\input{extended_results}

\end{document}

%% file: physics_format.tex
\makeatletter
\AtBeginDocument{%
  \@ifpackageloaded{hyperref}{\hypersetup{hidelinks}}{}%
}
\makeatother

%% file: additional_results.tex
\section*{frame-conditioned robustness}

The exact irrep criterion admits a quantitative finite-resolution version
controlled by the frame operator.

\begin{theorem}[Stable classicalization on a resolved support]
\label{thm:third-classicalization-frame}
Let \(S\) be the frame operator of the resolved representation sector and
assume its smallest eigenvalue on that support is \(\sigma>0\).  If an
implemented frame has operator \(S+E\) with \(\|E\|<\sigma\), then
\[
 \begin{aligned}
 \|(S+E)^{-1}\|
 &\le\frac1{\sigma-\|E\|},\\
 \|(S+E)^{-1}-S^{-1}\|
 &\le
 \frac{\|E\|}{\sigma(\sigma-\|E\|)}.
 \end{aligned}
\]
Hence a reconstruction map whose remaining synthesis operator has norm
at most \(C\) changes by at most
\[
 \frac{C\|E\|}{\sigma(\sigma-\|E\|)}.
\]
Every strict classicalization equality or coherent-advantage witness with
diamond-norm margin \(m\) survives whenever the induced channel error is
smaller than \(m\).
\end{theorem}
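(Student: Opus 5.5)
The argument is standard matrix perturbation theory, applied on the resolved support where $S$ is invertible with $S\succeq\sigma P$.

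\textbf{Step 1: bound on $\|(S+E)^{-1}\|$.} For every unit vector $x$ in the support,
\[
 \|(S+E)x\|\ge\|Sx\|-\|Ex\|\ge\sigma-\|E\|>0 .
\]
So $S+E$ is injective on the finite-dimensional support, hence invertible there, and the first bound follows. An equivalent route is the Neumann series: write $S+E=S(I+S^{-1}E)$ and use $\|S^{-1}E\|\le\|E\|/\sigma<1$.

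\textbf{Step 2: bound on the difference of inverses.} Use the resolvent identity
\[
 (S+E)^{-1}-S^{-1}=-(S+E)^{-1}ES^{-1}.
\]
Submultiplicativity, together with Step 1 and $\|S^{-1}\|=1/\sigma$, gives
\[
 \|(S+E)^{-1}-S^{-1}\|\le\frac{\|E\|}{\sigma(\sigma-\|E\|)} .
\]

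\textbf{Step 3: reconstruction maps.} I model a reconstruction map as $R=T\,(S+E)^{-1}$, where the synthesis operator $T$ has $\|T\|\le C$. Its change is then $T\bigl[(S+E)^{-1}-S^{-1}\bigr]$. By Step 2 this is bounded by $C\|E\|/\bigl(\sigma(\sigma-\|E\|)\bigr)$.

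\textbf{Step 4: survival of witnesses.} A strict equality or an advantage witness is a statement of the form $\frac12\|\mathcal A-\mathcal B\|_\diamond\ge\text{threshold}+m$, or a comparison between deficiencies with margin $m$.
\begin{itemize}
\item If each implemented channel differs from its ideal version by less than the induced error $\epsilon$, the triangle inequality for the diamond norm moves the relevant quantities by at most $\epsilon$.
\item For deficiencies $\delta_U$, the infimum over $\Lambda$ is $1$-Lipschitz in the channels, because $\|\Lambda\|_\diamond=1$.
\item Hence the inequality persists whenever $\epsilon<m$.
\end{itemize}

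\textbf{Main obstacle.} The hard step is the interpretive one in Step 4. The statement does not specify how the frame error $E$ induces the channel error, so I would state it conditionally, exactly as the theorem does: the margin is preserved once the induced channel error is below $m$. I would also note that if "strict equality" means an exact identity, robustness only applies to the strict inequalities it certifies.
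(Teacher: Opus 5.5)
Your proposal is correct and takes the same route as the paper: the Neumann series and the resolvent identity $(S+E)^{-1}-S^{-1}=-(S+E)^{-1}ES^{-1}$ on the supported subspace give the two inverse bounds, and Lipschitz continuity of norm-one separating functionals (your diamond-norm triangle inequality and $1$-Lipschitz deficiency) gives survival of the witnesses. One bookkeeping point in Step 4: if both the source and target channels are perturbed, their diamond-norm errors add, so the quantity that must stay below $m$ is the total perturbation of the witnessed quantity, not the per-channel error $\epsilon$.
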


\begin{proof}
The inverse bounds follow from the resolvent identity and the Neumann
series on the supported subspace.  The witness statement follows from
the Lipschitz continuity of every norm-one linear separation functional.
\end{proof}

The theorem distinguishes missing representation content from poor
conditioning.  A missing irrep gives an exact structural obstruction;
a present but weakly resolved irrep gives a large factor
\((\sigma-\|E\|)^{-1}\).  Compact-group cutoff limits therefore require
both vanishing omitted Plancherel weight and a controlled lower frame
bound.

%% file: frontier_results.tex
\section*{subgroup inheritance of universal classicalization}

The exact representation criterion is monotone under restriction to a
physical subgroup.

\begin{theorem}[Subgroup inheritance]
\label{thm:frontier-classicalization-subgroup}
Let \(H\le G\) be finite groups.  If a \(G\)-carrier \(U\) contains every
irreducible \(G\)-type, then its restricted carrier
\(\operatorname{Res}^G_HU\) contains every irreducible \(H\)-type.
Consequently universal classicalization for \(G\) implies universal
classicalization for every subgroup noise model, and
\[
 d_{\rm comp}(G)\ge d_{\rm comp}(H).
\]
\end{theorem}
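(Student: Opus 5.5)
The argument has three parts: transport completeness through the regular representation, apply \cref{thm:main-equivalence} for $H$, and compare minimal complete carriers. Each step needs one explicit check.

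\emph{Completeness restricts.} Suppose $U$ contains every irreducible $G$-type. Then $U$ contains a copy of $\bigoplus_{\lambda\in\Irr(G)}V_\lambda$. Restriction respects direct sums, so it suffices to show that this sum, restricted to $H$, contains every irreducible $H$-type. Fix $\mu\in\Irr(H)$. We need some $\lambda$ with $\langle\operatorname{Res}^G_HV_\lambda,\mu\rangle_H\neq0$. Frobenius reciprocity gives $\langle\operatorname{Res}^G_HV_\lambda,\mu\rangle_H=\langle V_\lambda,\operatorname{Ind}_H^G\mu\rangle_G$. Since $\operatorname{Ind}_H^G\mu$ is nonzero (its dimension is $[G:H]\dim\mu$), it contains at least one irreducible $G$-type $\lambda$. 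That $\lambda$ occurs in $U$, so $\mu$ occurs in $\operatorname{Res}^G_HU$. A reciprocity-free alternative: $\operatorname{Res}^G_H$ of the regular representation of $G$ is $[G:H]$ copies of the regular representation of $H$, so it contains every $\mu$. Each $\mu$ must then already appear inside the restriction of some $V_\lambda$, and every such $\lambda$ is present in $U$.

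\emph{Classicalization is inherited.} By the first step, $\operatorname{Res}^G_HU$ is representation complete as an $H$-carrier. Applying \cref{thm:main-equivalence} to the group $H$ gives the equivalence (i)$\Leftrightarrow$(v), so $\delta_{\operatorname{Res}U}(q|p)=\delta_{\rm cl}(q|p)$ for all $p,q\in\Prob(H)$. Here $\delta_{\rm cl}$ is computed with convolution in $H$, and the optimal classical correction $r$ is restricted to $H$. This is the meaning of ``universal classicalization for every subgroup noise model'' that I would adopt. A laws-on-$G$-supported-in-$H$ reading would need an extra remark that the classical optimum can be taken with $r$ on $H$; I would avoid that reading. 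The orbit formulation gives a consistency check: restricting an orthonormal $G$-orbit from item (iii) to $H$ yields an orthonormal $H$-orbit, which is item (iii) for $H$ directly.

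\emph{Dimension inequality.} By \cref{cor:min-dim}, $d_{\rm comp}(G)$ is the dimension of the minimal complete $G$-carrier $\bigoplus_\lambda V_\lambda$. The first step shows its restriction is a complete $H$-carrier of the same dimension. Minimality of $d_{\rm comp}(H)$, again by \cref{cor:min-dim}, then gives $d_{\rm comp}(H)\le d_{\rm comp}(G)$.

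The only nonroutine point is the first step: ensuring every irreducible $H$-type appears in the restriction of some $G$-irrep. Frobenius reciprocity, or the restriction of the regular representation, settles it.
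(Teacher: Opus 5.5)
Your proposal is correct and matches the paper's proof: an irreducible constituent $\lambda$ of $\operatorname{Ind}_H^G\tau$ gives, via Frobenius reciprocity, $\tau\subseteq\operatorname{Res}^G_H\lambda\subseteq\operatorname{Res}^G_HU$. Then \cref{thm:main-equivalence} is applied to $H$, and restricting a minimum complete $G$-carrier gives $d_{\rm comp}(H)\le d_{\rm comp}(G)$. Your regular-representation alternative and the orbit-restriction check are valid extras, not needed for the argument.
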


\begin{proof}
Fix \(\tau\in\operatorname{Irr}(H)\).  The induced representation
\(\operatorname{Ind}_H^G\tau\) contains some irreducible
\(\lambda\in\operatorname{Irr}(G)\).  Frobenius reciprocity gives
\[
 \operatorname{Hom}_H
 \!\left(\tau,\operatorname{Res}^G_H\lambda\right)
 \cong
 \operatorname{Hom}_G
 \!\left(\operatorname{Ind}_H^G\tau,\lambda\right)\ne0.
\]
Since \(U\) contains \(\lambda\), its restriction contains \(\tau\).
Thus the restricted carrier is representation complete and the main
equivalence applies.  Restricting a minimum complete \(G\)-carrier gives
an \(H\)-complete carrier of the same dimension, proving the inequality.
\end{proof}

This supplies an immediate obstruction test: failure on one experimentally
accessible subgroup rules out universal classicalization for the full
symmetry, without optimizing over all \(G\)-valued noise laws.

%% file: final_results.tex
\section*{Subquotient obstruction hierarchy}

Universal classicalization descends not only to subgroups but to every
physical subquotient.

\begin{theorem}[Subquotient inheritance]
\label{thm:final-classicalization-subquotient}
Let \(H\le G\) and \(N\trianglelefteq H\).  If a \(G\)-carrier \(U\)
contains every irreducible \(G\)-type, then the \(N\)-fixed subspace of
\(\operatorname{Res}^G_HU\) contains every irreducible representation of
\(H/N\).  Consequently
\[
 d_{\rm comp}(G)\ge d_{\rm comp}(H/N)
\]
for every subquotient \(H/N\) of \(G\).
\end{theorem}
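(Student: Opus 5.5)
The argument has two steps: restrict to the subgroup, then pass to the quotient by taking $N$-invariants.

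First, the subgroup inheritance theorem (\cref{thm:frontier-classicalization-subgroup}) shows that $\operatorname{Res}^G_HU$ contains every irreducible $H$-type. So it suffices to prove the following: if an $H$-representation $W$ is representation complete for $H$, then its $N$-fixed subspace $W^N$ is a representation of $H/N$ containing every irreducible $H/N$-type.

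For the $H/N$-action, normality of $N$ is what makes $W^N$ invariant. If $w\in W^N$, $h\in H$ and $n\in N$, then
\[
 n h w = h(h^{-1}nh)w = hw,
\]
so $W^N$ is $H$-invariant. Since $N$ acts trivially on it, $W^N$ carries a representation of $H/N$.

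Next, fix $\mu\in\Irr(H/N)$ and inflate it along $H\to H/N$ to an irreducible $H$-representation $\tilde\mu$. Irreducibility is preserved because the images of $H$ and of $H/N$ are the same set of operators. By completeness, $W$ contains a copy of $\tilde\mu$, that is, an $H$-equivariant embedding $\iota:\tilde\mu\hookrightarrow W$. Because $N$ acts trivially on $\tilde\mu$, equivariance gives
\[
 n\,\iota(v)=\iota(nv)=\iota(v)\qquad(n\in N),
\]
so $\iota$ lands in $W^N$. This is an $H/N$-equivariant embedding of $\mu$ into $W^N$. Hence $W^N$ is $H/N$-complete, and by \cref{thm:main-equivalence} it universally classicalizes $H/N$-valued noise.

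For the dimension inequality, take a $G$-carrier $U$ of minimal dimension $d_{\rm comp}(G)$; by \cref{cor:min-dim} it is representation complete. Then
\[
 d_{\rm comp}(G)=\dim U\ge \dim\bigl(\operatorname{Res}^G_HU\bigr)^N\ge d_{\rm comp}(H/N),
\]
where the last inequality holds because $(\operatorname{Res}^G_HU)^N$ is an $H/N$-complete carrier and, by \cref{cor:min-dim}, every such carrier has dimension at least $d_{\rm comp}(H/N)=\sum_{\mu}d_\mu$.

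None of these steps is hard. The only points requiring care are the invariance of the fixed subspace, which uses normality of $N$, and the preservation of irreducibility under inflation.
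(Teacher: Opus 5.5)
Your proof is correct and follows essentially the same route as the paper's: use subgroup inheritance to get every irreducible $H$-type in $\operatorname{Res}^G_HU$, inflate each irreducible of $H/N$ to $H$ and observe that its copy lies in the $N$-fixed subspace, then apply this to a minimum complete $G$-carrier to get the dimension inequality. You also spell out two details the paper leaves implicit: normality of $N$ makes the fixed subspace $H$-invariant, and inflation preserves irreducibility.
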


\begin{proof}
Every irreducible \(\tau\) of \(H/N\) inflates to an irreducible
\(\widetilde\tau\) of \(H\) on which \(N\) acts trivially.
Theorem~\ref{thm:frontier-classicalization-subgroup} shows that
\(\operatorname{Res}^G_HU\) contains \(\widetilde\tau\).  Its carrier lies
inside the \(N\)-fixed subspace and realizes \(\tau\).  Applying this to
a minimum complete \(G\)-carrier proves the dimension inequality.
\end{proof}

Hence every experimentally accessible reduced symmetry supplies a lower
bound on the coherent resource required by the full group.  The strongest
subquotient obstruction can be computed before optimizing over
\(G\)-valued noise.

\section*{Robust branch visibility from an imperfect orbit frame}

Exact representation completeness provides an orthonormal orbit seed.  In an
experiment the prepared seed may be imperfect, so the relevant quantity is
the lower frame bound of its orbit rather than exact orthogonality.

Let \(|\Psi_g\rangle=(U_g\otimes I_R)|\Psi\rangle\) and let
\(G_\Psi\) be the \(|G|\times|G|\) Gram matrix
\[
 (G_\Psi)_{g,h}=\langle\Psi_g|\Psi_h\rangle.
\]

\begin{theorem}[Approximate orbit-isometry certificate]
\label{thm:approximate-orbit-isometry}
Assume
\[
 G_\Psi\succeq a I_{|G|}
\]
for some \(a>0\).  Then every complex signed group law \(\sigma\) satisfies
\[
 a\|\sigma\|_1
 \le
 \|\Phi_\sigma^U\|_\diamond
 \le
 \|\sigma\|_1.
\]
In particular, for probability laws \(p,q\),
\[
 a\,\|p-q\|_{\rm TV}
 \le
 \frac12\|\Phi_p^U-\Phi_q^U\|_\diamond
 \le
 \|p-q\|_{\rm TV}.
\]
If \(\|G_\Psi-I\|_{\rm op}\le\epsilon<1\), one may take
\(a=1-\epsilon\).
\end{theorem}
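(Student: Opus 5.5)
The plan is to use the same witness as in the proof of (iii)$\Rightarrow$(iv) in \Cref{thm:main-equivalence}, namely the seed $|\Psi\rangle$ itself. The only change is that the output is no longer diagonal in an orthonormal basis, so I will invert the orbit frame explicitly. I take $|\Psi\rangle$ to be a unit vector, as in (iii), so the diagonal of $G_\Psi$ is identically one. In particular, the hypothesis $G_\Psi\succeq aI$ forces $a\le1$.

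First, I introduce the synthesis operator $T:\mathbb C^{G}\to\mathcal H\otimes R$, defined by $Te_g=|\Psi_g\rangle$. Then $T^\dagger T=G_\Psi\succeq aI$, so $T$ is injective. Applying $\Phi_\sigma^U\otimes\id_R$ to $\ketbra{\Psi}{\Psi}$ gives
\begin{equation}
 X:=(\Phi_\sigma^U\otimes\id_R)(\ketbra{\Psi}{\Psi})=\sum_g\sigma(g)\ketbra{\Psi_g}{\Psi_g}=TD_\sigma T^\dagger,
\end{equation}
where $D_\sigma=\operatorname{diag}(\sigma)$. Since the input is a unit-trace state, $\|\Phi_\sigma^U\|_\diamond\ge\|X\|_1$.

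Second, I recover $D_\sigma$ from $X$ using the left inverse $L=(T^\dagger T)^{-1}T^\dagger$, which satisfies $D_\sigma=LXL^\dagger$. Hölder's inequality for the trace norm then gives
\begin{equation}
 \|\sigma\|_1=\|D_\sigma\|_1\le\|L\|_{\rm op}^2\|X\|_1 .
\end{equation}
Moreover, $LL^\dagger=(T^\dagger T)^{-1}T^\dagger T(T^\dagger T)^{-1}=G_\Psi^{-1}$, so $\|L\|_{\rm op}^2=\|G_\Psi^{-1}\|_{\rm op}\le1/a$. Combining these yields $a\|\sigma\|_1\le\|X\|_1\le\|\Phi_\sigma^U\|_\diamond$. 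The step I expect to need the most care is exactly this one: the naive estimate would pass through $\|T\|_{\rm op}$, which can be as large as $\sqrt{|G|}$. Using the left inverse instead gives the clean constant $1/a$, with no dependence on the upper frame bound. This step works verbatim for complex $\sigma$, because $D_\sigma$ is normal with singular values $|\sigma(g)|$.

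Finally, the upper bound $\|\Phi_\sigma^U\|_\diamond\le\|\sigma\|_1$ follows from the triangle inequality and $\|\Ad_{U_g}\|_\diamond=1$, exactly as in \Cref{thm:main-equivalence}. Taking $\sigma=p-q$ and using $\|p-q\|_1=2\|p-q\|_{\rm TV}$ gives the probability version. For the last claim, Weyl's inequality applied to $G_\Psi=I+(G_\Psi-I)$ gives $\lambda_{\min}(G_\Psi)\ge1-\|G_\Psi-I\|_{\rm op}\ge1-\epsilon>0$, so $a=1-\epsilon$ is admissible.
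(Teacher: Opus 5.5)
Your proof is correct and is essentially the paper's argument: the paper writes the synthesis map in polar form $V=WG_\Psi^{1/2}$ and bounds $\|D_\sigma\|_1\le\|G_\Psi^{-1/2}\|_{\rm op}^2\|G_\Psi^{1/2}D_\sigma G_\Psi^{1/2}\|_1=\|G_\Psi^{-1/2}\|_{\rm op}^2\|VD_\sigma V^\dagger\|_1\le a^{-1}\|VD_\sigma V^\dagger\|_1$, and your left inverse is exactly $L=G_\Psi^{-1/2}W^\dagger$, so the two computations coincide. The paper leaves the normalization of $|\Psi\rangle$ implicit in the phrase ``seed state''; you state it explicitly, which also gives $a\le 1$.
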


\begin{proof}
Let \(V:\ell_2(G)\to\mathcal H\otimes R\) be the synthesis map
\(V|g\rangle=|\Psi_g\rangle\), so \(V^*V=G_\Psi\).  On the seed state,
\[
 (\Phi_\sigma^U\otimes\id_R)(|\Psi\rangle\langle\Psi|)
 =V D_\sigma V^*,
\]
where \(D_\sigma\) is diagonal with entries \(\sigma(g)\).  Write the polar
decomposition \(V=W G_\Psi^{1/2}\), with \(W\) an isometry.  Since
\(G_\Psi\succeq aI\),
\[
 \|D_\sigma\|_1
 \le
 \|G_\Psi^{-1/2}\|_{\rm op}^2
 \|G_\Psi^{1/2}D_\sigma G_\Psi^{1/2}\|_1
 \le
 a^{-1}\|V D_\sigma V^*\|_1.
\]
This gives the lower diamond-norm bound.  The upper bound is the triangle
inequality together with \(\|\operatorname{Ad}_{U_g}\|_\diamond=1\).
The probability-law statement follows by taking \(\sigma=p-q\), and the
last claim follows from the smallest-eigenvalue bound.
\end{proof}

Thus exact orbit orthogonality is not a brittle all-or-nothing certificate.
The smallest Gram eigenvalue gives a directly measurable multiplicative
retention factor for every branch contrast simultaneously.  A vanishing
lower frame bound recovers the missing-irrep obstruction, while a well
conditioned orbit certifies robust distinguishability before any
post-processing optimization is attempted.

%% file: extended_results.tex
\section*{Sobolev compact-group law}

The visible-band theorem extends from exactly bandlimited laws to smooth
laws with a quantitative, vanishing tail.  This supplies an approximate
compact-group identifiability theorem without assuming that sums may
simply be replaced by integrals.

\begin{theorem}[Sobolev-stable compact-group reconstruction]
\label{thm:fourth-sobolev-compact}
Let \(K\) be a compact Lie group with normalized Haar measure and positive
Laplace--Beltrami operator \(\mathcal L\).  Let \(\Pi_E\) be the spectral
projector onto eigenvalues at most \(E\), and assume that every irreducible
type occurring in \(\operatorname{ran}\Pi_E\) is visible in
\(U\otimes\overline U\).  Write \(c_E>0\) for the bandlimited stability
constant in \cref{eq:bandlimited-constant}.  If \(p,q\) are probability
densities and \(\sigma=p-q\in H^s(K)\), then, with
\[
 D_U(p,q)=\frac12\|\Phi_p-\Phi_q\|_\diamond ,
\]
one has
\begin{equation}
 \boxed{\quad
 \TV(p,q)
 \le
 \frac{D_U(p,q)}{c_E}
 \frac{1+c_E^{-1}}{2}(1+E)^{-s/2}
 \|\sigma\|_{H^s}.
 \quad}
\label{eq:fourth-sobolev-upper}
\end{equation}
Equivalently,
\begin{equation}
 D_U(p,q)
 \ge
 c_E\TV(p,q)
 -\frac{1+c_E}{2}(1+E)^{-s/2}
 \|\sigma\|_{H^s}.
\label{eq:fourth-sobolev-lower}
\end{equation}
\end{theorem}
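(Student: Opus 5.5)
The plan is a band-plus-tail splitting. Write $\sigma=p-q=\sigma_E+\sigma_{>}$ with $\sigma_E=\Pi_E\sigma$ and $\sigma_{>}=(1-\Pi_E)\sigma$. Apply the bandlimited stability of Theorem~\ref{thm:bandlimited-compact-stability} to $\sigma_E$. Control the tail only through the trivial upper bound $\|\Phi_{\sigma_>}\|_\diamond\le\|\sigma_>\|_{L^1}$, which comes from the triangle inequality and $\|\Ad_{U_g}\|_\diamond=1$. The Sobolev hypothesis is used solely to make $\|\sigma_>\|_{L^1}$ small.

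\textbf{Step 1: $\sigma_E$ lies in an admissible band.} The Laplace--Beltrami operator acts on each Peter--Weyl isotypic block $\lambda$ by a scalar Casimir eigenvalue $E_\lambda$. On a compact Lie group only finitely many $\lambda$ have $E_\lambda\le E$. Hence $\Pi_E$ is the Fourier projection onto the finite set $\Lambda_0=\{\lambda:E_\lambda\le E\}$. This set is closed under $\lambda\mapsto\overline\lambda$, since the dual representation has the same Casimir eigenvalue, so $\sigma_E$ is real. By hypothesis every type in $\Lambda_0$ is visible in $U\otimes\overline U$. Therefore $\sigma_E\in\mathcal E_{\Lambda_0}$, and with $c_E=c_{U,\Lambda_0}$,
\[
 \|\Phi_{\sigma_E}\|_\diamond\ge c_E\|\sigma_E\|_{L^1}.
\]
I expect this step to need the most care, because it depends on the $\Pi_E$ spectral projector commuting with the Fourier decomposition and preserving reality.

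\textbf{Step 2: tail estimate.} Use the convention $\|\sigma\|_{H^s}^2=\sum_\lambda(1+E_\lambda)^s d_\lambda\|\widehat\sigma(\lambda)\|_2^2$. Since Haar measure is normalized, Cauchy--Schwarz and Plancherel give
\[
 \|\sigma_>\|_{L^1}\le\|\sigma_>\|_{L^2}\le(1+E)^{-s/2}\|\sigma\|_{H^s}=:\eta .
\]

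\textbf{Step 3: combine.} By the triangle inequality, $\|\sigma_E\|_{L^1}\ge\|\sigma\|_{L^1}-\eta$. Then
\[
 2D_U(p,q)=\|\Phi_{\sigma_E}+\Phi_{\sigma_>}\|_\diamond
 \ge c_E\bigl(\|\sigma\|_{L^1}-\eta\bigr)-\eta .
\]
Using $\|\sigma\|_{L^1}=2\TV(p,q)$, this becomes
\[
 D_U(p,q)\ge c_E\TV(p,q)-\tfrac{1+c_E}{2}\eta ,
\]
which is \eqref{eq:fourth-sobolev-lower}. Dividing by $c_E$ and rearranging gives
\[
 \TV(p,q)\le\frac{D_U(p,q)}{c_E}+\frac{1+c_E^{-1}}{2}\,\eta .
\]
This is how I would read the boxed \eqref{eq:fourth-sobolev-upper}: the two terms are added, not multiplied, as the word ``equivalently'' indicates. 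I would state the additive form explicitly in the proof.
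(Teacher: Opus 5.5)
Your proposal is correct and follows the paper's proof essentially step for step: the paper also bounds the tail by $t_E=\|(I-\Pi_E)\sigma\|_{L^1}\le\|(I-\Pi_E)\sigma\|_{L^2}\le(1+E)^{-s/2}\|\sigma\|_{H^s}$, applies the visible-band constant to $\Pi_E\sigma$, and uses $\|\Phi_{(I-\Pi_E)\sigma}\|_\diamond\le t_E$ to obtain $\|\sigma\|_{L^1}\le c_E^{-1}\|\Phi_\sigma\|_\diamond+(1+c_E^{-1})t_E$. Your additive reading of the boxed inequality is the form this chain actually produces, so the display is missing a plus sign. Your Step~1 check that $\Pi_E\sigma$ is real with finite, conjugation-closed visible Fourier support makes explicit a point the paper leaves implicit.
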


\begin{proof}
Spectral calculus and normalized Haar measure give
\[
 \begin{aligned}
 t_E&:=\|(I-\Pi_E)\sigma\|_{L^1}\\
 &\le\|(I-\Pi_E)\sigma\|_{L^2}\\
 &\le(1+E)^{-s/2}\|\sigma\|_{H^s}.
 \end{aligned}
\]
Apply the visible-band lower bound to \(\Pi_E\sigma\).  Contractivity of
the integrated conjugation channel gives
\[
 \begin{aligned}
 \|\sigma\|_{L^1}
 &\le \|\Pi_E\sigma\|_{L^1}+t_E\\
 &\le c_E^{-1}\|\Phi_{\Pi_E\sigma}\|_\diamond+t_E\\
 &\le c_E^{-1}\|\Phi_\sigma\|_\diamond
      +(1+c_E^{-1})t_E.
 \end{aligned}
\]
Divide by two and substitute the tail estimate.  Rearrangement gives
\eqref{eq:fourth-sobolev-lower}.
\end{proof}

For a family with
\(\sup_E\|p_E-q_E\|_{H^s}\le B\), the only two obstructions are now
explicit: deterioration of \(c_E\) and the tail
\((1+E)^{-s/2}B\).  If a carrier sequence makes
\(c_E\) no smaller than an inverse polynomial while the cutoff grows,
the reconstruction error vanishes at a computable rate.  The remaining
resource-theoretic problem is sharper: determine whether the same rate
controls the advantage of arbitrary coherent post-processing, and
construct lower witnesses matching the dependence on \(c_E\).